\documentclass[a4paper,conference]{IEEEtran}
\usepackage{amsmath,amsthm,amsfonts,amssymb,latexsym,extarrows,booktabs,array}
\usepackage{fancyhdr,graphicx,tabularx}
\usepackage{multirow}
\usepackage{lipsum}
\usepackage[top=0.75in, bottom=0.55in,left=0.6in, right=0.6in]{geometry}
\usepackage{amsfonts}
\usepackage{graphicx}
\usepackage{subfigure}
\usepackage {amsmath}
\usepackage {amssymb}
\usepackage {latexsym}
\usepackage {graphicx}
\usepackage {cite}
\usepackage {subfigure}
\usepackage {url}
\usepackage {stfloats}
\usepackage {mathrsfs}
\usepackage{algorithm}
\usepackage{algorithmic}
\usepackage{xcolor}

\usepackage{setspace}
\usepackage{listings}
\IEEEoverridecommandlockouts

\hyphenation{op-tical net-works semi-conduc-tor}

\newcommand{\ls}[1]
    {\dimen0=\fontdimen6\the\font
     \lineskip=#1\dimen0
     \advance\lineskip.5\fontdimen5\the\font
     \advance\lineskip-\dimen0
     \lineskiplimit=.9\lineskip
     \baselineskip=\lineskip
     \advance\baselineskip\dimen0
     \normallineskip\lineskip
     \normallineskiplimit\lineskiplimit
     \normalbaselineskip\baselineskip
     \ignorespaces
}
\setlength{\columnsep}{0.23in}
\begin{document}

 \title{{ Optimal Grassmann Manifold  Eavesdropping: A Huge Security Disaster for  $M$-1-2\\ Wiretap Channels}}
\author{
\IEEEauthorblockN{Dongyang Xu\IEEEauthorrefmark{1}\IEEEauthorrefmark{2}, Pinyi Ren\IEEEauthorrefmark{1}\IEEEauthorrefmark{2}, and James A. Ritcey\IEEEauthorrefmark{3}}\\\vspace{-10pt}
 \IEEEauthorblockA{\ls{1.3}\IEEEauthorrefmark{1}School of Electronic and Information Engineering, Xi'an Jiaotong University, China} \\\vspace{-10pt}
\IEEEauthorblockA{\IEEEauthorrefmark{2}Shaanxi Smart Networks and Ubiquitous Access Research Center, China. }\\\vspace{-10pt}
\IEEEauthorblockA{\IEEEauthorrefmark{3}Department of Electrical Engineering, University of Washington, USA.}\\\vspace{-10pt}
\IEEEauthorblockA{E-mail: \{\emph {xudongyang@stu.xjtu.edu.cn, pyren@mail.xjtu.edu.cn, ritcey@ee.washington.edu}
\}}
}
\maketitle
\begin{abstract}
We in this paper introduce an advanced eavesdropper that aims to paralyze the artificial-noise-aided secure communications. We consider  the $M$-1-2 Gaussian MISO wiretap channel, which consists of a $M$-antenna transmitter, a single-antenna receiver,  and a two-antenna eavesdropper. This type of  eavesdropper, by adopting an optimal Grassmann manifold (OGM) filtering structure,  can  reduce the maximum achievable secrecy rate (MASR) to be zero  by using only two receive antennas, regardless of  the number of antennas  at the transmitter. Specifically, the eavesdropper  exploits  linear  filters to serially recover the legitimate  information symbols and intends to find the optimal  filter that  minimizes the mean-square error (MSE) in estimating the symbols. During the process, a convex semidefinite programming (SDP) problem with constraints on the filter matrix can be formulated  and  solved. Interestingly, the resulted optimal filters constitute a complex Grassmann manifold on the matrix space. Based on the filters, a novel expression of MASR is  derived and further verified to be zero  under the  noiseless environment. Besides this, an achievable variable region (AVR) that induces zero MASR is presented analytically in the noisy case. Numerical results are provided  to illustrate the huge disaster in the respect  of secrecy rate.

\end{abstract}
\vspace{5pt}
\begin{IEEEkeywords}
Physical layer security, eavesdropping, artificial noise, MISO, Grassmann manifold.
\end{IEEEkeywords}
\section{Introduction}
\label{introduction}
The research on physical layer security has been drawing  great attentions with   various  wireless technologies  moving forward. This protection technique,   by exploiting the wireless channel variations, can enhance the information secrecy of  network  which is conventionally  protected by the cryptography in the upper layers of  network protocol stack. The initial research on the physical layer security  refers to the area of  information-theoretic secrecy in the wiretap channel which  consists of a source, a destination, and an eavesdropper that attempts to intercept the communication between the source and the destination~\cite{Wyner_AD}. The authors in~\cite{Leung-Yan-Cheong} investigated a  fundamental secrecy  measure called  secrecy capacity which is defined as  the maximum achievable secrecy rate (MASR) between the source and the destination while ensuring that no information can be inferred by the eavesdropper. It was shown that a non-zero secrecy capacity can be achieved between the source and the destination if the channel to the destination is better than that to the eavesdropper. Otherwise, the temporal variations of the channel and the additional degrees of multi-antennas   have  to be exploited  to maintain a positive secrecy rate~\cite{Yingbin_Liang,Amitav_Mukherjee}. The key point is that the perfect knowledge of  channel state information (CSI) of the main and eavesdropper channel  is required at the source, which however is  an unpractical assumption~\cite{Khisti_I}.

Therefore,  large amounts of research  have been  concentrated on the secure transmission  under  imperfect  acquisition of  eavesdropper's  CSI at the source.  Among  those work,  artificial noise (AN), being emitted on top of the information-bearing signal to disrupt the reception at the eavesdropper,  gradually comes to be the commonsense  of    guaranteeing high secrecy rate~\cite{Negi}.  With the aid of  multi-antenna channels, AN can be placed in the dimensions that cause least interference at the destination but engender dramatic difference between the signal qualities at the destination and the eavesdropper.   There exist huge number of AN-based transmission strategies, such as the precoder design and power allocation mechanisms~\cite{Zhou,Khisti_II,Nan_Yang,Jun_Zhu}. In a multiple-input-single-output-multiple-antenna eavesdropper (MISOME) system where eavesdroppers are equipped with less antennas than the source, authors in~\cite{Zhou} gave an analytical closed-form expression of an achievable secrecy rate and determined an optimal transmit power allocation between the legitimate signal and the AN to maximize the secrecy rate. Authors in~\cite{Khisti_II} studied the AN based scheme in MIMO wiretap channel scenario and presented that the achievable secrecy rate can however be arbitrarily far from the secrecy capacity if the eavesdropper has more antennas than the source and destination. Authors in~\cite{Nan_Yang} investigated the optimal physical-layer secure transmission with artificial noise in the wiretap channel and analyzed the impact of the number of transceiver antennas on the secrecy rate. Authors in~\cite{Jun_Zhu} considered secure downlink transmission in an AN-aided multicell massive MIMO system with a passive multi-antenna eavesdropper. It shows that even with AN assistance, secure transmission can not be guaranteed if the number of eavesdropper antennas is too large. The above AN based schemes assumed that the eavesdropper usually adopt a theoretically capacity-achievable receiver to combat AN.  Note that this type of receiver is  originally designed without considering the artificial interference. However, there are few literatures discussing the effect of  receiver design on the AN based schemes, especially from an eavesdropping  perspective.   The resulted eavesdropper is therefore underestimated  and  mistaken
to have to eliminate AN interference completely  by consuming large antenna resources.

To give a more comprehensive  understanding of the functionality of AN  mechanism, we in this paper first introduce an optimal Grassmann manifold (OGM) receiver  at the eavesdropper  which can exploit the  statistical property of received data to combat AN.  We can show that secrecy protection usually achieved by AN  can be completely broken down  by the eavesdropper with only two receive antennas, no matter how many antennas have been equipped at the source.  Our contributions are detailed as follows:
\begin{enumerate}
\item Based on the covariance of  received data,  a whitening operation is performed on the received data to eliminate the power influence of AN. Thereafter,  a vector-based  projector is derived so that  the eavesdropper can identify the phase information of secrecy information symbols from the whitened data. Finally,  an explicit expression of recovered data is  presented but subjected to an unknown filtering matrix.
\item Based on  the calculation of  the mean-squared error (MSE) of estimated symbols, an optimization problem of minimizing the MSE  can be formulated in which the filter matrix  serves as  the optimized variable. Particularly, this problem can be  cast as a convex  semidefinite program  (SDP) and solved with  a closed-form optimal solution. Interestingly, the solution includes a series of filters which  are verified  to lie on a complex Grassmann manifold.
\item Based on the optimal filter structure, the novel expression of MASR is derived. In the noiseless case,  the optimal power strategy of achieving MASR  is transformed into  no power allocation for AN and  zero secrecy rate can be always induced. This security disaster is  further confirmed  through simulations in a wide range of node locations and under a large number of transmit antennas. In the noisy case, an expression of  achievable variable region (AVR) is derived and essentially constituted by several factors, such as, a controllable unitary matrix,  the eavesdropper's noise level and  its distance to the source. To be worse, the eavesdropper can control AVR and then cause  zero secrecy rate no matter how many antennas are utilized at the source. We also simulate the  AVR  and confirm its existence even when the legitimate destination is located in various regions.
\end{enumerate}
The remainder of this paper is structured as follows. Section~\ref{System Description} presents the system model overview. A novel eavesdropping  threat and its technical modeling is given in Section~\ref{NETTM}. An optimal grassmann eavesdropping  manifold is constructed in  Section~\ref{COGEM}.  Secrecy performance analysis is performed in Section~\ref{SPAADE}. Simulation results are presented in Section~\ref{Simulation Results}. Finally, we conclude our work in Section~\ref{Conclusions}.

Notation: Boldface is used for vector ${\bf{a}}$ and matrix ${\bf{A}}$.  ${{\bf{A}}^{{H}}}$ respectively denotes   the conjugate transpose of matrix ${\bf{A}}$. ${\bf{A}}\left( {:,n} \right)$  represents the $n$-th column of ${\bf{A}}$. $\left\| {\cdot} \right\|$ denotes the Euclidean norm of a vector or a matrix. ${\mathbb{E}}\left\{  \cdot  \right\}$ is the expectation operator. ${\rm{Re}}\left\{  \cdot  \right\}$ and $\left| {\cdot} \right|$  respectively represent  the real part and  the modulus of a complex number.
\section{System Description}
\label{System Description}
 We consider a $M$-1-2 wiretap channel  where a transmitter Alice  with $M$ antennas  communicates with  a single-antenna legitimate receiver Bob and  is also eavesdropped  by an eavesdropper Eve with two antennas satisfying $ 2<M$. We denote the $i$-th transmitted symbol at Alice as  ${{\bf{s}}_i} \in {C^{M \times 1}},i = 1, \ldots, T$.  $T$ represents the total number of symbols (or channel uses) within each coherence interval. The symbols received at Bob  and Eve for $i = 1, \ldots, T$ are respectively  given by:
\begin{equation}
\begin{array}{l}
{{\bf{y}}_{{\rm{B}},i}} = {{\bf{h}}_{{\rm{B}}}}{{\bf{s}}_i} + {{\bf{n}}_{{\rm{B}},i}}\\
{{\bf{y}}_{{\rm{E}},i}} = {{\bf{H}}_{{\rm{E}}}}{{\bf{s}}_i} + {{\bf{n}}_{{\rm{E}},i}}
\end{array}
\end{equation}
where ${{\bf{h}}_{{\rm{B}}}} \in {{\mathbb C}^{1 \times {M}}}$ denotes the channel from Alice to Bob. ${{\bf{H}}_{{\rm{E}}}} \in {{\mathbb C}^{2 \times {M}}}$ denotes  the channel from Alice to Eve. The  elements of ${{\bf{h}}_{{\rm{B}}}}$ and ${{\bf{H}}_{{\rm{E}}}}$ are assumed to be  independent zero-mean complex Gaussian random  variables with unit variance.   The components of ${{\bf{n}}_{{\rm{B}},i}}$ and ${{\bf{n}}_{{\rm{E}},i}}$  are the equivalent noise with power $\sigma _{\rm{B}}^2$ and $\sigma _{\rm{E}}^2$, respectively. We assume  ${{\bf{n}}_{{\rm{B}},i}} = \beta _{\rm{B}}^{{ - 1/2}}{\widetilde {\bf{n}}_{{\rm{B}},i}}$ and ${{\bf{n}}_{{\rm{E}},i}} = \beta _{\rm{E}}^{ { - 1/2}}{\widetilde {\bf{n}}_{{\rm{E}},i}}$ where ${\beta _{\rm{B}}}$ and ${\beta _{\rm{E}}}$ are   path loss  factors respectively from Alice  to Bob and Eve. ${\widetilde {\bf{n}}_{{\rm{B}},i}}$ and ${\widetilde {\bf{n}}_{{\rm{E}},i}}$ are i.i.d. additive white Gaussian noise (AWGN) samples with variance $\widetilde \sigma _{\rm{B}}^2$  and $\widetilde \sigma _{\rm{E}}^2$, respectively. That's to say, there exists $\sigma _{\rm{B}}^2{\rm{ = }}\beta _{\rm{E}}^{ - 1}\widetilde \sigma _{\rm{B}}^2$ and $\sigma _{\rm{E}}^2 = \beta _{\rm{E}}^{ - 1}\widetilde \sigma _{\rm{E}}^2$.  Also we have the following assumption:
\begin{enumerate}
\item Channels obey the block-fading model where the channel state remains quasi-static  over each coherence interval (or channel block), but becomes independent across a different fading block.
\item $T$  within each coherence interval is large enough to allow for invoking random coding arguments~\cite{Yingbin_Liang}.
\item ${{\bf{h}}_{{\rm{B}}}}$ is accurately estimated by Bob and is also known by Alice using a noiseless feedback link from Bob and  the knowledge of both ${{\bf{h}}_{{\rm{B}}}}$ and ${{\bf{H}}_{{\rm{E}}}}$ is available at Eve~\cite{Negi}.
\end{enumerate}
\label{system model}
\begin{figure}[!t]
\centering \includegraphics[width=1.0\linewidth]{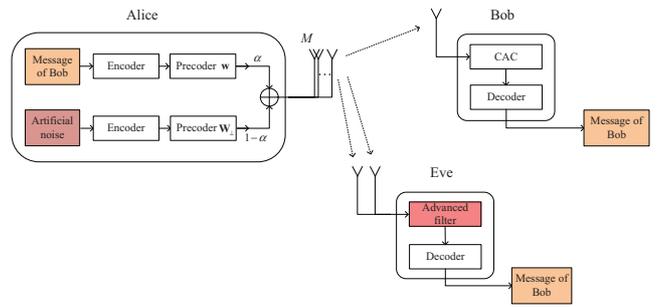}
\caption{Illustration of the security threat in artificial-noise-aided $M$-1-2 wiretap channels: An eavesdropper intends to exploit advanced receiver filters to enhance the interception  of  information belonging to Bob.}\label{fig:sys}
\vspace{-8pt}
\end{figure}
As shown in Fig.~\ref{fig:sys}, the key idea of AN proposed in~\cite{Negi} is  that Alice  combines the  information-bearing symbols  intended for  Bob  with the AN symbols   which are generated  into the null space of  Bob's channel. Thanks to  AN, the signal reception of Eve can be disturbed while  Bob is not affected. Thus positive secrecy capacity can be achieved and  huge secrecy rate gains can be maintained.  In this case, the transmitted signal vector ${{\bf{s}}_i}$ at Alice  is  represented by  ${{\bf{s}}_i} = {{\bf{v}}_i} + {{\bf{n}}_i}, i = 1, \ldots ,T$ where ${{\bf{v}}_i}\in {{\mathbb C}^{{M} \times 1}}$ is  the transmitted  signal vector  intended for Bob and ${{\bf{n}}_i}\in {{\mathbb C}^{{M} \times 1}}$ is the AN vector.  In particular,  ${{\bf{v}}_i}$ satisfies  ${{\bf{v}}_i} = {\bf{w}}{x_i}$ where ${\bf{w}}\in {{\mathbb C}^{{M} \times 1}}$ is the beamforming vector and ${x_i}$ is the  the information bearing symbol. AN, independent of $ {{\bf{v}}_i}$, satisfies ${{\bf{n}}_i} = {{\bf{W}}_ \bot }{{\bf{a}}_i}$ where ${{\bf{W}}_ \bot }\in {{\mathbb C}^{{M} \times \left( {{M} - 1} \right)}}$ lies in the null space of ${{\bf{H}}_{\rm{B}}}$.  It is assumed that  ${x_i}$ and elements of  ${{\bf{a}}_i}$ are complex Gaussian distributed and independently transmitted  in time domain, with variance $\sigma _x^2$ and  $\sigma _a^2$ respectively.  Then the  symbols received at Bob and Eve respectively  become
\begin{equation}
\begin{array}{l}
{{\bf{y}}_{{\rm{B}},i}} = {{\bf{h}}_{\rm{B}}}{\bf{w}}{x_i} + {{\bf{n}}_{{\rm{B}},i}},i = 1, \ldots , T\\
{{\bf{y}}_{{\rm{E}},i}} = {{\bf{H}}_{\rm{E}}}{\bf{w}}{x_i} + {{\bf{H}}_{\rm{E}}}{{\bf{W}}_ \bot }{{\bf{a}}_i} + {{\bf{n}}_{{\rm{E}},i}},i = 1, \ldots , T
\end{array}
\end{equation}
We consider that the total power per transmission denoted by $P$  satisfies   $P = \sigma _x^2 + \left( {{M} - 1} \right)\sigma _a^2$. We  also denote the fraction of total power allocated to the information bearing signal as $\alpha $ and have:
\begin{equation}
\begin{array}{l}
\sigma _x^2 = \alpha P\\
\sigma _a^2 = {{\left( {1 - \alpha } \right)P} \mathord{\left/
 {\vphantom {{\left( {1 - \alpha } \right)P} {\left( {{M} - 1} \right)}}} \right.
 \kern-\nulldelimiterspace} {\left( {{M} - 1} \right)}}, \alpha  \in \left[ {0,1} \right]
\end{array}
\end{equation}
\section{Novel Eavesdropping  Threat and Its Technical Modeling}
\label{NETTM}
Based on  above AN based context, we in this section introduce a novel  statistical eavesdropping threat and give the specific modeling  details for this  threat, following  the analysis of secrecy capacity  under  traditional receiver assumption for Eve.

\subsection{What We Originally Thought:  Eavesdropper Suffers from AN Interference}
To begin with, let us focus on the traditional definition of secrecy capacity which is  based on the maximization of  mutual information difference. Note that the capacity-achievable combiner (CAC) at  Eve  is not well specified.  When AN is  introduced for Eve, the same methodology for secrecy capacity  is  followed since  Eve does not know the specific value  of  AN and can deem  AN as  a colored noise equivalently. In this context,  the secrecy capacity and its lower bound  can be respectively represented by~\cite{Negi}:

\begin{equation}
\begin{array}{l}
{C_{\rm{S}}} = \mathop {\max }\limits_{\sigma _x^2,\sigma _a^2} {\left[ {{C_{\rm{B}}} - {C_{\rm{E}}}} \right]^ + }\\
{C_{{\rm{S,lower}}}} = \mathop {\max }\limits_{\sigma _x^2,\sigma _a^2} {\left[ {{C_{\rm{B}}} - {C_{{\rm{E,upper}}}}} \right]^ + }
\end{array}
\end{equation}
where ${\left[ x \right]^ + } = \max \left\{ {0,x} \right\}$, and
\begin{equation}
{C_{\rm{B}}} =  {{{\log }_2}\left( {1 + {\frac{{\sigma _x^2}}{{\sigma _{\rm{B}}^2}}}{\left\| {{{\bf{h}}_{\rm{B}}}} \right\|^2}} \right)},
\end{equation}
\begin{equation}
{C_{\rm{E}}} = {\log _2}\left( {1 + \sigma _{{x}}^2{\bf{w}}^{\rm{H}}{\bf{H}}_{\rm{E}}^{\rm{H}}{\bf{K}}_{{\rm{E,1}}}^{ - 1}{{\bf{H}}_{\rm{E}}}{{\bf{w}}}} \right),
\end{equation}
and
\begin{equation}
{C_{{\rm{E,upper}}}} = {\log _2}\left( {1 + \frac{{{M} - 1}}{{{\alpha ^{ - 1}} - 1}}{\bf{w}}^{\rm{H}}{\bf{H}}_{\rm{E}}^{\rm{H}}{\bf{K}}_{{\rm{E,2}}}^{ - 1}{{\bf{H}}_{\rm{E}}}{{\bf{w}}}} \right)
\end{equation}
where $\bf{w}$ is the  beamformer matrix  satisfying ${{{\bf{w}} = {{\bf{h}}_{\rm{B}}}} \mathord{\left/
 {\vphantom {{{\bf{w}} = {{\bf{h}}_{\rm{B}}}} {\left\| {{{\bf{h}}_{\rm{B}}}} \right\|}}} \right.
 \kern-\nulldelimiterspace} {\left\| {{{\bf{h}}_{\rm{B}}}} \right\|}}$.
 ${{\bf{K}}_{{\rm{E,1}}}} = \sigma _{\rm{a}}^2{{\bf{H}}_{\rm{E}}}{{\bf{W}}_ \bot }{\bf{W}}_ \bot ^{\rm{H}}{\bf{H}}_{\rm{E}}^{\rm{H}} + {{\bf{I}}_2}$ is derived under  the noisy Eve case  while  ${{\bf{K}}_{{\rm{E,2}}}}{\rm{ = }}{{\bf{H}}_{\rm{E}}}{{\bf{W}}_ \bot }{\bf{W}}_ \bot ^{\rm{H}}{\bf{H}}_{\rm{E}}^{\rm{H}}$ is formulated  in the case of a noiseless eavesdropping environment which creates an lower bound on the secrecy capacity~\cite{Zhou}.

However,  the continual utilization of  CAC is not feasible for Eve because it can exploit the statistical property of received data, not even necessarily distinguish between signals,  and  enhance the decoding efficiency of the information symbols that was intercepted.
\subsection{What Eve Can Actually Do: Introducing Statistical Filter  to Combat AN Interference }
In this section, we introduce the concept of OGM  eavesdropping and show how Eve can exploit the statistical  information of received signals to devise the advanced  filter. Let us consider the received symbols of length $N$ ($N \le T$) in one channel block for Eve:
\begin{equation}
{{\bf{Y}}_{\rm{E}}} = {{\bf{H}}_{\rm{E}}}{\bf{wx}} + {{\bf{H}}_{\rm{E}}}{{\bf{W}}_ \bot }{\bf{A}} + {{\bf{N}}_{\rm{E}}}
\end{equation}
where ${{\bf{Y}}_{\rm{E}}} = \left[ {{{\bf{y}}_{{\rm{E}},i \in {\cal N}}}} \right] \in {{\mathbb C}^{{2} \times N}}$, ${\bf{x}} = \left[ {{x_{i \in {\cal N}}}} \right] \in {{\mathbb C}^{1 \times N}}$, ${\bf{A}} = \left[ {{{\bf{a}}_{i \in {\cal N}}}} \right] \in {{\mathbb C}^{\left( {{M} - 1} \right) \times N}}$, ${{\bf{N}}_{\rm{E}}} = \left[ {{{\bf{n}}_{{\rm{E}},i \in {\cal N}}}} \right] \in {{\mathbb C}^{{2} \times N}}$. ${\cal N}$ represents the time sample set defined as ${\cal N} \buildrel \Delta \over = \left\{ {t\left| {1 \le t \le N} \right.} \right\}$. We consider a transformation satisfying:
\begin{equation}
{\overline {\bf{Y}} _{\rm{E}}}{\rm{ = }}\overline {\bf{U}} {{\bf{Y}}_{\rm{E}}}
\end{equation}
where ${\overline{\bf{U}}} \in {{\mathbb C}^{{2} \times 2}}$  is the unitary  matrix and can be  known and arbitrarily configured by Eve. We define ${\overline{\bf{U}}}$  as the transform matrix. Then we can rewrite the  signal model under $\alpha  \in \left( {0,1} \right)$ as follows:
\begin{equation}
{\overline {\bf{Y}} _{\rm{E}}} = {\bf{HS}} + {\overline {\bf{N}} _{\rm{E}}}
\end{equation}
where ${\bf{H}} \buildrel \Delta \over = \left[ {\begin{array}{*{20}{c}}
{{\sigma _x}\overline {\bf{U}} {{\bf{H}}_{\rm{E}}}{\bf{w}}}&{{\sigma _a}\overline {\bf{U}} {{\bf{H}}_{\rm{E}}}{{\bf{W}}_ \bot }}
\end{array}} \right]$,  ${\bf{S}} \buildrel \Delta \over = \left[ {\begin{array}{*{20}{c}}
{\overline {\bf{x}} }\\
{\overline {\bf{A}} }
\end{array}} \right]$ and ${\overline {\bf{N}} _{\rm{E}}} = \overline {\bf{U}} {{\bf{N}}_{\rm{E}}}$. Each of elements in ${\overline {\bf{x}} }$ and ${\overline {\bf{A}} }$ are respectively power-normalized version of that in $\bf{x}$ and $\bf{A}$.
Based on this model, we  give the following steps regarding  the novel OGM  filter design:
\subsubsection{Whitening  Data}
We define  the  estimation of  covariance matrix of ${\overline {\bf{Y}} _{\rm{E}}}$ as ${{\widehat{\bf R}}_{{{\overline {\bf{Y}} }_{\rm{E}}}}}{\rm{ }} \buildrel \Delta \over = {{{{\overline {\bf{Y}} }_{\rm{E}}}\overline {\bf{Y}} _{\rm{E}}^{\rm{H}}} \mathord{\left/
 {\vphantom {{{{\overline {\bf{Y}} }_{\rm{E}}}\overline {\bf{Y}} _{\rm{E}}^{\rm{H}}} N}} \right.
 \kern-\nulldelimiterspace} N}$ and derive the noiseless version as follows:
\begin{equation}
{{{{\widehat {\bf R}}}}_{\rm S}} = {{{{\widehat {\bf R}}}}_{{\overline {\bf{Y}} _{\rm{E}}}}} - \sigma _{\rm E}^2{{\bf{I}}_2}
\end{equation}
The  whitening operation, defined as ${\bf{Q}}\in {{\mathbb C}^{{2} \times 2}}$, is  performed on  ${\overline {\bf{Y}} _{\rm{E}}}$ to eliminate the  second-order  statistical information, which  is given as follows:
\begin{equation}
{{\bf{I}}_2}{\rm{ = }}{\bf{Q}}{{{{\widehat {\bf R}}}}_{\rm S}}{{\bf{Q}}^{\rm{H}}}
\end{equation}
After performing singular value decomposition (SVD) on ${{{{\widehat {\bf R}}}}_{\rm S}}$, we have   ${{{{\widehat {\bf R}}}}_{\rm S}} = {\bf{U\Sigma }}{{\bf{U}}^{\rm{H}}}$. Based on equation (12), we have:
\begin{equation}
\begin{array}{l}
{\bf{Q}} = {\bf{X}}{{\bf{\Sigma }}^{{{ - 1} \mathord{\left/
 {\vphantom {{ - 1} 2}} \right.
 \kern-\nulldelimiterspace} 2}}}{{\bf{U}}^{\rm{H}}},
{\bf{X}}{{\bf{X}}^{\rm{H}}} = {{\bf{I}}_2}
\end{array}
\end{equation}
where ${\bf{X}}\in {{\mathbb C}^{{2} \times 2}}$ is an unknown matrix for further optimization. Then the signal after whitening is given as:
\begin{equation}\label{E.15}
{{\bf{Y}}_\Delta }\buildrel \Delta \over = {\bf{QY}}{\rm{ = }}{\bf{X}}{\overline {\bf{V}}}{\bf{S}}{\rm{ + }}{\bf{Q}}{\overline {\bf{N}} _{\rm{E}}}
\end{equation}
where ${\overline {\bf{V}}}$ satisfies $\overline {\bf{V}}  = {{\bf{\Sigma }}^{ - 1/2}}\left[ {\begin{array}{*{20}{c}}
{{{\bf{\Sigma }}^{1/2}}}&{\bf{0}}
\end{array}} \right]{{\bf{V}}^{\rm{H}}}$. ${{\bf{V}}^{\rm{H}}}$ is the right eigenmatrix satisfying the  matrix SVD  with ${\bf{H}}{\rm{ = }}{\bf{U}}\left[ {\begin{array}{*{20}{c}}
{{{\bf{\Sigma }}^{1/2}}}&{\bf{0}}
\end{array}} \right]{{\bf{V}}^{\rm{H}}}$. As we can see, any replacement  for the matrix ${\bf{X}}$ with other  matrices satisfying equation (14) leaves  the covariance of ${{{\bf{Y}}_{\Delta}}}$ unchanged, which means that the power influence of AN  is removed. The same happens when   any replacement  for unitary matrices $\bf V$ with other unitary matrix.  However,  ${\bf{X}}{{\overline{\bf{V}}}^{\rm{H}}}$ causes a phase error on the symbol estimation of   $\bf{S}$ at Eve.


\subsubsection{Projecting the Processed Data onto a Basis}
To eliminate the phase error, Eve is designed to  project the whitened  data onto ${\bf{X}}{{\overline{\bf{V}}}^{\rm{H}}}\left( {:,1} \right)$ and to estimate  the information bearing symbols intended for  Bob.  We denote the projector as
\begin{equation}
{\bf{q}} = {\bf{X}}{{\overline{\bf{V}}}^{\rm{H}}}\left( {:,1} \right) = {\bf{QH}}\left( {:,1} \right)
\end{equation}
It is obviously that there exists ${\left\| {\bf{q}} \right\|^2} \le 1$.
Based on the overall two steps, we can have the following proposition for  the symbol-level filtering.
 \newtheorem{proposition}{Proposition}
\begin{proposition}
Given the first column of  matrix  ${\bf{H}}$,  the estimated version of the symbol vector  ${\overline {\bf{x}} }$ under the OGM filter with the region  $\alpha  \in \left( {0,1} \right)$ can be derived as follows:
\begin{equation}\label{E.20}
\widehat {\overline{\bf{x}} } = {{\bf{q}}^{\rm{H}}}{\bf{Q}}{{\overline{\bf{Y}}}_{\rm{E}}}
\end{equation}
 \end{proposition}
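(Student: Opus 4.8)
The plan is to read off $\widehat{\overline{\bf{x}}}$ by following what the two steps above (whitening, then projection onto ${\bf{q}}$) do to the signal model, and then collapsing them into the single matched-filter expression \eqref{E.20}. First I would record from \eqref{E.15} that the whitened observation is ${\bf{Y}}_\Delta = {\bf{Q}}{\overline{\bf{Y}}}_{\rm{E}} = {\bf{X}}{\overline{\bf{V}}}{\bf{S}} + {\bf{Q}}{\overline{\bf{N}}}_{\rm{E}}$, and that the SVD bookkeeping gives ${\bf{Q}}{\bf{H}} = {\bf{X}}{\bf{\Sigma}}^{-1/2}\left[{\bf{\Sigma}}^{1/2}\ {\bf{0}}\right]{\bf{V}}^{\rm{H}} = {\bf{X}}{\overline{\bf{V}}}$; hence the projector introduced just above, ${\bf{q}} = {\bf{Q}}{\bf{H}}(:,1)$, is exactly the first column of ${\bf{X}}{\overline{\bf{V}}}$.

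Next I would exploit the block structure ${\bf{S}} = \left[{\overline{\bf{x}}}^{\rm{T}}\ {\overline{\bf{A}}}^{\rm{T}}\right]^{\rm{T}}$: partitioning ${\overline{\bf{V}}}$ accordingly as $\left[{\overline{\bf{V}}}(:,1)\ {\overline{\bf{V}}}(:,2\!:\!M)\right]$ gives ${\bf{X}}{\overline{\bf{V}}}{\bf{S}} = {\bf{q}}{\overline{\bf{x}}} + {\bf{X}}{\overline{\bf{V}}}(:,2\!:\!M){\overline{\bf{A}}}$, so in the whitened space the information symbols occupy exactly the one-dimensional subspace spanned by ${\bf{q}}$. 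The estimate of ${\overline{\bf{x}}}$ is therefore the coordinate of ${\bf{Y}}_\Delta$ along ${\bf{q}}$, extracted by the matched filter ${\bf{q}}^{\rm{H}}$, i.e.\ $\widehat{\overline{\bf{x}}} = {\bf{q}}^{\rm{H}}{\bf{Y}}_\Delta = {\bf{q}}^{\rm{H}}{\bf{Q}}{\overline{\bf{Y}}}_{\rm{E}}$, which is \eqref{E.20}. Substituting the decomposition also shows $\widehat{\overline{\bf{x}}} = \|{\bf{q}}\|^{2}{\overline{\bf{x}}} + {\bf{q}}^{\rm{H}}{\bf{X}}{\overline{\bf{V}}}(:,2\!:\!M){\overline{\bf{A}}} + {\bf{q}}^{\rm{H}}{\bf{Q}}{\overline{\bf{N}}}_{\rm{E}}$: a copy of the desired symbols, scaled by the known constant $\|{\bf{q}}\|^{2}\le 1$, plus a residual AN term and the filtered noise --- precisely the two contributions whose effect the subsequent MSE-minimizing SDP in ${\bf{X}}$ is designed to suppress.

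The argument is essentially bookkeeping, so the only places that need care are: (i) justifying that the empirical covariance used to build ${\bf{Q}}$ may be replaced by its noiseless value ${\bf{H}}{\bf{H}}^{\rm{H}}$ --- for $N$ large this follows from $\tfrac1N{\bf{S}}{\bf{S}}^{\rm{H}}\to{\bf{I}}$ (by the power normalization of ${\overline{\bf{x}}}$ and ${\overline{\bf{A}}}$), the vanishing of the signal--noise cross terms, and the unitary ${\overline{\bf{U}}}$ leaving the noise covariance equal to $\sigma_{\rm{E}}^{2}{\bf{I}}_2$, so that $\widehat{\bf{R}}_{\rm{S}} = \widehat{\bf{R}}_{{\overline{\bf{Y}}}_{\rm{E}}} - \sigma_{\rm{E}}^{2}{\bf{I}}_2 \approx {\bf{H}}{\bf{H}}^{\rm{H}} = {\bf{U}}{\bf{\Sigma}}{\bf{U}}^{\rm{H}}$; and (ii) keeping the two distinct SVDs aligned --- the $2\times2$ eigendecomposition of $\widehat{\bf{R}}_{\rm{S}}$ against the $2\times M$ SVD ${\bf{H}} = {\bf{U}}\left[{\bf{\Sigma}}^{1/2}\ {\bf{0}}\right]{\bf{V}}^{\rm{H}}$ --- so that ${\bf{Q}}{\bf{H}} = {\bf{X}}{\overline{\bf{V}}}$ is legitimate and ${\bf{q}} = {\bf{Q}}{\bf{H}}(:,1)$ genuinely coincides with the column of ${\bf{X}}{\overline{\bf{V}}}$ onto which ${\overline{\bf{x}}}$ projects. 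I expect step (ii) to be the main (if mild) obstacle; once it is in hand, \eqref{E.20} drops out at once.
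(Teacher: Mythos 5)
Your proposal is correct and follows the same route the paper takes: the paper states Proposition~1 as an immediate consequence of the whitening step ${\bf{Y}}_\Delta={\bf{Q}}\overline{\bf{Y}}_{\rm{E}}={\bf{X}}\overline{\bf{V}}{\bf{S}}+{\bf{Q}}\overline{\bf{N}}_{\rm{E}}$ followed by projection onto ${\bf{q}}={\bf{Q}}{\bf{H}}(:,1)$, and your writeup simply makes that composition explicit, including the useful decomposition $\widehat{\overline{\bf{x}}}=\|{\bf{q}}\|^{2}\overline{\bf{x}}+{\bf{q}}^{\rm{H}}({\bf{X}}\overline{\bf{V}})(:,2\!:\!M)\overline{\bf{A}}+{\bf{q}}^{\rm{H}}{\bf{Q}}\overline{\bf{N}}_{\rm{E}}$ that motivates the subsequent MSE optimization. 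Your identification of ${\bf{q}}$ as the first column of ${\bf{X}}\overline{\bf{V}}$ (rather than of ${\bf{X}}\overline{\bf{V}}^{\rm{H}}$ as typeset in the paper) is the consistent reading of ${\bf{Q}}{\bf{H}}={\bf{X}}\overline{\bf{V}}$, and your two flagged caveats are exactly the points the paper leaves implicit.
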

 The overall serial filtering  process at Eve  can be shown in Fig.~\ref{fig:2}.

\section{Construction of Optimal Grassmann Eavesdropping  Manifold}
\label{COGEM}
In this section, we formulate an optimization problem of  minimizing  MSE on the variable matrix $\bf{X}$ and derive an optimal Grassmann manifold by solving this problem. Not that we only consider the case where the power ratio satisfies $0 < \alpha < 1$ in the following. If AN does not exist ($ \alpha = 1$), the conventional CAC receiver can be utilized at Eve. The detection of whether AN happens can be based on the rank detection of the covariance matrix ${{{{\widehat {\bf R}}}}_{\rm S}}$. We do not specify the details of  detection but focus on the following filter design.
\subsection{{Optimization Formulation}}
Consider the MSE in estimating  symbols $\widehat {\overline{\bf{x}} }$ and  give the  symbol estimation error  as follows:
\begin{equation}
\Delta {\overline{\bf{x}}} \buildrel \Delta \over =\widehat {\overline {\bf x} } - \overline {\bf{x}}{\rm{ = }}{{\bf{q}}^{\rm{H}}}{\bf{Q}}{{\overline {\bf{Y}}}_{\rm{E}}} - {\bf{S}}\left( {1,:} \right)
\end{equation}
with the NMSE defined as:
\begin{equation}
{\rm{NMSE}} \buildrel \Delta \over = \mathbb{E}\left\{ {{{{{\left\| {\Delta \overline {\bf{x}} } \right\|}^2}} \mathord{\left/
 {\vphantom {{{{\left\| {\Delta \overline {\bf{x}} } \right\|}^2}} N}} \right.
 \kern-\nulldelimiterspace} N}} \right\}
\end{equation}
After simplification,  the NMSE can be derived as equation (19) if we define ${\bf{Z}}{\rm{ = }}{{\bf{X}}^{\rm{H}}}{\bf{X}}$. The optimization problem of minimizing NMSE  is built up as follows:
\begin{equation}\setcounter{equation}{20}
 \begin{aligned}
 & \underset{{\bf{Z}}}{\text{min}}
 & &  {{\rm{Tr}}\left\{ {{\bf{TZ}}\left( {{{\bf{I}}_2}{\rm{ + }}\sigma _{\rm{E}}^2{{\bf{\Sigma }}^{ - 1}}} \right){{\bf{Z}}^{\rm{H}}} - 2{\bf{T}}{{\bf{Z}}^{\rm{H}}}} \right\}}\\
 & \text{s.\,t.}
 && {\bf{T}} = {{\bf{\Sigma }}^{{{ - 1} \mathord{\left/
 {\vphantom {{ - 1} 2}} \right.
 \kern-\nulldelimiterspace} 2}}}{{\bf{U}}^{\rm{H}}}{\bf{H}}\left( {:,1} \right){\bf{H}}{\left( {:,1} \right)^{\rm{H}}}{\bf{U}}{{\bf{\Sigma }}^{{{ - 1} \mathord{\left/
 {\vphantom {{ - 1} 2}} \right.
 \kern-\nulldelimiterspace} 2}}}\\
 &&& {\bf{H}}{\left( {:,1} \right)^{\rm{H}}}{\bf{U}}{{\bf{\Sigma }}^{{{ - 1} \mathord{\left/
 {\vphantom {{ - 1} 2}} \right.
 \kern-\nulldelimiterspace} 2}}}{{\bf{Z}}^{\rm{H}}}{{\bf{\Sigma }}^{{{ - 1} \mathord{\left/
 {\vphantom {{ - 1} 2}} \right.
 \kern-\nulldelimiterspace} 2}}}{{\bf{U}}^{\rm{H}}}{\bf{H}}\left( {:,1} \right) - 1 \le 0, \\
 &&&{\rm{Tr}}\left\{ {\bf{Z}} \right\}{\rm{ = }}2 ,\\
 &&&{\bf{Z}}\succeq {\bf{0}}.\\
 \end{aligned}
 \end{equation}
 In this optimization, the second constraint is constructed according to the principle of ${\left\| {\bf{q}} \right\|^2} \le 1$. The third and fourth constraint mean that  matrix $\bf{Z}$ should satisfy the same rank requirements as ${\bf{X}}{{\bf{X}}^{\rm{H}}}$.
\subsection{{Optimal Solution}}
 We can easily prove that the above optimization  is a convex problem.  Besides, the set of  constraint  functions is convex and the Slater condition is satisfied. Therefore, the KKT conditions are sufficient and necessary for the optimum solution.
 \begin{figure*}
\begin{equation}\setcounter{equation}{19}
{\rm{NMSE}} = {\bf{H}}{\left( {:,1} \right)^{\rm{H}}}{\bf{U}}{{\bf{\Sigma }}^{{{ - 1} \mathord{\left/
 {\vphantom {{ - 1} 2}} \right.
 \kern-\nulldelimiterspace} 2}}}{\bf{Z}}\left( {{{\bf{I}}_2}{\rm{ + }}\sigma _{\rm{E}}^2{{\bf{\Sigma }}^{ - 1}}} \right){{\bf{Z}}^{\rm{H}}}{{\bf{\Sigma }}^{{{ - 1} \mathord{\left/
 {\vphantom {{ - 1} 2}} \right.
 \kern-\nulldelimiterspace} 2}}}{{\bf{U}}^{\rm{H}}}{\bf{H}}\left( {:,1} \right) - {\rm{2}}{\bf{H}}{\left( {:,1} \right)^{\rm{H}}}{\bf{U}}{{\bf{\Sigma }}^{{{ - 1} \mathord{\left/
 {\vphantom {{ - 1} 2}} \right.
 \kern-\nulldelimiterspace} 2}}}{{\bf{Z}}^{\rm{H}}}{{\bf{\Sigma }}^{{{ - 1} \mathord{\left/
 {\vphantom {{ - 1} 2}} \right.
 \kern-\nulldelimiterspace} 2}}}{{\bf{U}}^{\rm{H}}}{\bf{H}}\left( {:,1} \right){\rm{ + 1}}
\end{equation}
\vspace{-20pt}
 \end{figure*}
 \begin{figure*}
 \begin{align*}
{\cal L}\left( {{\bf{Z}},\lambda ,\mu ,{\bf{\Psi }}} \right) &={\rm{ Tr}}\left\{ {{{\bf{\Sigma }}^{{{ - 1} \mathord{\left/
 {\vphantom {{ - 1} 2}} \right.
 \kern-\nulldelimiterspace} 2}}}{{\bf{U}}^{\rm{H}}}{\bf{H}}\left( {:,1} \right){\bf{H}}{{\left( {:,1} \right)}^{\rm{H}}}{\bf{U}}{{\bf{\Sigma }}^{{{ - 1} \mathord{\left/
 {\vphantom {{ - 1} 2}} \right.
 \kern-\nulldelimiterspace} 2}}}{\bf{Z}}\left( {{{\bf{I}}_2}{\rm{ + }}\sigma _{\rm{E}}^2{{\bf{\Sigma }}^{ - 1}}} \right){{\bf{Z}}^{\rm{H}}} - 2{{\bf{\Sigma }}^{{{ - 1} \mathord{\left/
 {\vphantom {{ - 1} 2}} \right.
 \kern-\nulldelimiterspace} 2}}}{{\bf{U}}^{\rm{H}}}{\bf{H}}\left( {:,1} \right){\bf{H}}{{\left( {:,1} \right)}^{\rm{H}}}{\bf{U}}{{\bf{\Sigma }}^{{{ - 1} \mathord{\left/
 {\vphantom {{ - 1} 2}} \right.
 \kern-\nulldelimiterspace} 2}}}{{\bf{Z}}^{\rm{H}}}} \right\}\\
  &- \lambda \left\{ {{\rm{Tr}}\left\{ {{{\bf{\Sigma }}^{{{ - 1} \mathord{\left/
 {\vphantom {{ - 1} 2}} \right.
 \kern-\nulldelimiterspace} 2}}}{{\bf{U}}^{\rm{H}}}{\bf{H}}\left( {:,1} \right){\bf{H}}{{\left( {:,1} \right)}^{\rm{H}}}{\bf{U}}{{\bf{\Sigma }}^{{{ - 1} \mathord{\left/
 {\vphantom {{ - 1} 2}} \right.
 \kern-\nulldelimiterspace} 2}}}{{\bf{Z}}^{\rm{H}}}} \right\}{\rm{ - }}1} \right\} - \mu \left\{ {{\rm{Tr}}\left\{ {{{\bf{Z}}^{\rm{H}}}} \right\} - {\rm{2}}} \right\} - {\rm{Tr}}\left\{ {{\bf{\Psi }}{{\bf{Z}}^{\rm{H}}}} \right\}
 \tag{21}
  \end{align*}
  \vspace{-15pt}
  \end{figure*}
  \begin{figure*}[ht]
  \begin{equation}\setcounter{equation}{23}
\lambda^{*} {\rm{ = }}\frac{{1 - {\bf{H}}{{\left( {:,1} \right)}^{\rm{H}}}{\bf{U}}{{\bf{\Sigma }}^{{{ - 1} \mathord{\left/
 {\vphantom {{ - 1} 2}} \right.
 \kern-\nulldelimiterspace} 2}}}\Gamma \left( {{{\bf{\Sigma }}^{{{ - 1} \mathord{\left/
 {\vphantom {{ - 1} 2}} \right.
 \kern-\nulldelimiterspace} 2}}}{{\bf{U}}^{\rm{H}}}{\bf{H}}\left( {:,1} \right){\bf{H}}{{\left( {:,1} \right)}^{\rm{H}}}{\bf{U}}{{\bf{\Sigma }}^{{{ - 1} \mathord{\left/
 {\vphantom {{ - 1} 2}} \right.
 \kern-\nulldelimiterspace} 2}}}} \right){{\left[ {{{\bf{I}}_2} + \sigma _{\rm{E}}^2{{\bf{\Sigma }}^{ - 1}}} \right]}^{ - 1}}{{\bf{\Sigma }}^{{{ - 1} \mathord{\left/
 {\vphantom {{ - 1} 2}} \right.
 \kern-\nulldelimiterspace} 2}}}{{\bf{U}}^{\rm{H}}}{\bf{H}}\left( {:,1} \right)}}{{{\bf{H}}{{\left( {:,1} \right)}^{\rm{H}}}{\bf{U}}{{\bf{\Sigma }}^{{{ - 1} \mathord{\left/
 {\vphantom {{ - 1} 2}} \right.
 \kern-\nulldelimiterspace} 2}}}{{\left[ {{{\bf{I}}_2} + \sigma _{\rm{E}}^2{{\bf{\Sigma }}^{ - 1}}} \right]}^{ - 1}}{{\bf{\Sigma }}^{{{ - 1} \mathord{\left/
 {\vphantom {{ - 1} 2}} \right.
 \kern-\nulldelimiterspace} 2}}}{{\bf{U}}^{\rm{H}}}{\bf{H}}\left( {:,1} \right)}}-2
\end{equation}
\vspace{-15pt}
\end{figure*}
Then we build up the Lagrangian as the equation (21) and  the KKT conditions for the optimization problem  are given by:
 \begin{align*}
{{\bf{\Sigma }}^{{{ - 1} \mathord{\left/
 {\vphantom {{ - 1} 2}} \right.
 \kern-\nulldelimiterspace} 2}}}{{\bf{U}}^{\rm{H}}}{\bf{H}}\left( {:,1} \right){\bf{H}}{\left( {:,1} \right)^{\rm{H}}}{\bf{U}}{{\bf{\Sigma }}^{{{ - 1} \mathord{\left/
 {\vphantom {{ - 1} 2}} \right.
 \kern-\nulldelimiterspace} 2}}}{\bf{Z}}\left( {{{\bf{I}}_2} + \sigma _{\rm{E}}^2{{\bf{\Sigma }}^{ - 1}}} \right)\,\,\,\,\,\,\,\,\,\,\,\,\,\,\,\,\,\\
 - \left( {\lambda  + 2} \right){{\bf{\Sigma }}^{{{ - 1} \mathord{\left/
 {\vphantom {{ - 1} 2}} \right.
 \kern-\nulldelimiterspace} 2}}}{{\bf{U}}^{\rm{H}}}{\bf{H}}\left( {:,1} \right){\bf{H}}{\left( {:,1} \right)^{\rm{H}}}{\bf{U}}{{\bf{\Sigma }}^{{{ - 1} \mathord{\left/
 {\vphantom {{ - 1} 2}} \right.
 \kern-\nulldelimiterspace} 2}}}\,\,\,\,\,\,\,\,\,\,\,\,\,\,\,\,\,\\
  - \mu {{\bf{I}}_2}- {\bf{\Psi }} = 0 \,\,\,\text{(22\,a)}\\
\lambda \left\{ {{\bf{H}}{{\left( {:,1} \right)}^{\rm{H}}}{\bf{U}}{{\bf{\Sigma }}^{{{ - 1} \mathord{\left/
 {\vphantom {{ - 1} 2}} \right.
 \kern-\nulldelimiterspace} 2}}}{{\bf{Z}}^{\rm{H}}}{{\bf{\Sigma }}^{{{ - 1} \mathord{\left/
 {\vphantom {{ - 1} 2}} \right.
 \kern-\nulldelimiterspace} 2}}}{{\bf{U}}^{\rm{H}}}{\bf{H}}\left( {:,1} \right) - 1} \right\} = 0\,\,\,\text{(22\,b)}\\
 {\bf{H}}{\left( {:,1} \right)^{\rm{H}}}{\bf{U}}{{\bf{\Sigma }}^{{{ - 1} \mathord{\left/
 {\vphantom {{ - 1} 2}} \right.
 \kern-\nulldelimiterspace} 2}}}{{\bf{Z}}^{\rm{H}}}{{\bf{\Sigma }}^{{{ - 1} \mathord{\left/
 {\vphantom {{ - 1} 2}} \right.
 \kern-\nulldelimiterspace} 2}}}{{\bf{U}}^{\rm{H}}}{\bf{H}}\left( {:,1} \right) - 1 \le 0\,\,\,\text{(22\,c)}\\
 {\rm{Tr}}\left\{ {{{\bf{Z}}^{\rm{H}}}} \right\} - 2= 0\,\,\,\text{(22\,d)}\\
{\rm{Tr}}\left\{ {{\bf{\Psi }}{{\bf{Z}}^{\rm{H}}}} \right\}= 0\,\,\,\text{(22\,e)}\\
{\bf{Z}} \succeq {\bf{0}}\,\,\,\text{(22\,f)}\\
{\bf{\Psi }} \succeq {\bf{0}},{{\lambda }} \ge {{0}}, \mu \ge 0\,\,\text{(22\,g)}
\end{align*}
 \begin{figure}[!t]
\centering \includegraphics[width=1.0\linewidth]{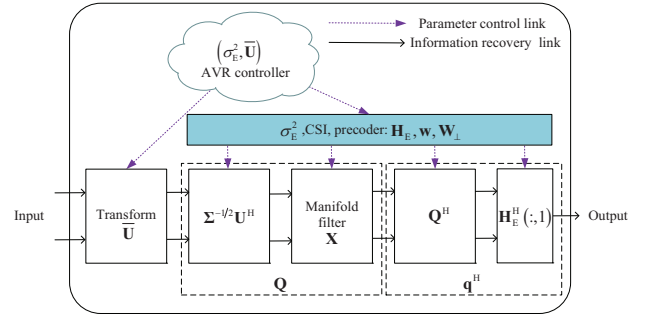}
\caption{Architecture of the proposed  OGM  filter at Eve.  }\label{fig:2}
\vspace{-8pt}
\end{figure}
According to the equation (22\,d) and (22\,e), the solution ${\bf{\Psi }}^{*}$ should be equal to be zero matrix. Considering  the rank  constraints  on both sides of the equation (22\,a), we can obtain the solution $ {\mu}^{*} {\rm{ = }}0$. Then we can get the closed form of ${\lambda}^{*}$ in equation (23) and then derive the solution ${\bf{Z}}^{*}$ as follows:
\begin{equation}
{{\bf{Z}}^{{\rm{*}}}}{\rm{ = }}\left\{ {\Gamma {\rm{ + }}\left( {\lambda^{*}  + 2} \right) \otimes {\bf{I}}} \right\}{\left[ {{{\bf{I}}_2} + \sigma _{\rm{E}}^2{{\bf{\Sigma }}^{ - 1}}} \right]^{ - 1}}
\end{equation}
where $\bf{\Gamma }$ is  the null space of matrix ${{{\bf{\Sigma }}^{{{ - 1} \mathord{\left/
 {\vphantom {{ - 1} 2}} \right.
 \kern-\nulldelimiterspace} 2}}}{{\bf{U}}^{\rm{H}}}{\bf{H}}\left( {:,1} \right){\bf{H}}{{\left( {:,1} \right)}^{\rm{H}}}{\bf{U}}{{\bf{\Sigma }}^{{{ - 1} \mathord{\left/
 {\vphantom {{ - 1} 2}} \right.
 \kern-\nulldelimiterspace} 2}}}}$.
Based on the equation (13) and (24),  a Grassmann manifold  ${\cal G}\left( {2,{\rm{2}}} \right)$ can be thus  constructed as the following:
 \begin{equation}
{\cal G}\left( {2,{\rm{2}}} \right) = \left\{ {{\bf{X}} \in {{\mathbb{C}}^{2 \times 2}}:{\bf{X}}{{\bf{X}}^{\rm{H}}} = {{{{\bf{I}}_2}}},{{\bf{X}}^{\rm{H}}}{\bf{X}} = {{\bf{Z}}^{{*}}}} \right\}
\end{equation}
As a special case where the null space is a zero matrix,  we can derive a simplified version  of  ${{\bf{Z}}^{{*}}}$ as follows:
 \begin{equation}
{\overline {\bf{Z}}^{{*}}}{\rm{ = }}\frac{{{{\left[ {{{\bf{I}}_2} + \sigma _{\rm{E}}^2{{\bf{\Sigma }}^{ - 1}}} \right]}^{ - 1}}}}{{{\bf{H}}{{\left( {:,1} \right)}^{\rm{H}}}{\bf{U}}{{\bf{\Sigma }}^{{{ - 1} \mathord{\left/
 {\vphantom {{ - 1} 2}} \right.
 \kern-\nulldelimiterspace} 2}}}{{\left[ {{{\bf{I}}_2} +\sigma _{\rm{E}}^2{{\bf{\Sigma }}^{ - 1}}} \right]}^{ - 1}}{{\bf{\Sigma }}^{{{ - 1} \mathord{\left/
 {\vphantom {{ - 1} 2}} \right.
 \kern-\nulldelimiterspace} 2}}}{{\bf{U}}^{\rm{H}}}{\bf{H}}\left( {:,1} \right)}}
\end{equation}
\section{Secrecy Performance Analysis}
\label{SPAADE}
In this section, we firstly derive  the exact  expression  of  SINR   in a closed form for  Eve under the OGM filter.
 Then we analyse the MASR  under different noise level of Eve  and show how Eve can  achieve the eavesdropping disaster: \emph{zero MASR}.
 \subsection{Closed Form of SINR Expression}
Theoretically, we assume  the data length is enough large and the covariance matrix  ${\widehat {\bf{R}}_{\rm{S}}}$ can be precisely approximated as $ {\bf{H}}{{\bf{H}}^{\rm{H}}}$. By using the OGM filter, the SINR of Eve can be derived as the following:
 \begin{equation}
{\rm{SIN}}{{\rm{R}}_{\rm{E}}} = \frac{{{{\left[ {{\bf{H}}{{\left( {:,1} \right)}^{\rm{H}}}{{\left( {{\bf{H}}{{\bf{H}}^{\rm{H}}} +\sigma _{\rm{E}}^2{{\bf{I}}_2}} \right)}^{ - 1}}{\bf{H}}\left( {:,1} \right)} \right]}^2}}}{{\sigma _{\rm{E}}^2{\bf{H}}{{\left( {:,1} \right)}^{\rm{H}}}{{\left( {{\bf{H}}{{\bf{H}}^{\rm{H}}} + \sigma _{\rm{E}}^2{{\bf{I}}_2}} \right)}^{ - 2}}{\bf{H}}\left( {:,1} \right)}}
 \end{equation}
For simplicity, we define the following equation:
 \begin{equation}
 \begin{array}{l}
{h_1} \buildrel \Delta \over = \overline {\bf{U}} \left( {1,:} \right){{\bf{H}}_{\rm{E}}}{\bf{w}},{h_2} \buildrel \Delta \over = \overline {\bf{U}} \left( {2,:} \right){{\bf{H}}_{\rm{E}}}{\bf{w}}\\
{{\bf{h}}_3} \buildrel \Delta \over = \overline {\bf{U}} \left( {1,:} \right){{\bf{H}}_{\rm{E}}}{{\bf{W}}_ \bot },{{\bf{h}}_4} \buildrel \Delta \over = \overline {\bf{U}} \left( {2,:} \right){{\bf{H}}_{\rm{E}}}{{\bf{W}}_ \bot }
\end{array}
 \end{equation}
Then  the exact expression of SINR at Eve can be expressed as:
  \begin{equation}
  {\rm{SIN}}{{\rm{R}}_{\rm{E}}} = \frac{{\Psi _1^2}}{{\sigma _{\rm{E}}^2\left( {{\Psi _2} + {\Psi _3}} \right)}}
 \end{equation}
 where ${\Psi _1}$, ${\Psi _2}$ and ${\Psi _3}$ can be shown in equation (30). Based on this equation, we propose  the following theorem:
   \newtheorem{theorem}{Theorem}
  \begin{theorem}
  For the $M$-1-2 Gaussian wiretap channel, the MASR  is given by:
 \begin{equation}\setcounter{equation}{31}
 \mathop {\max }\limits_{\sigma _x^2,\sigma _a^2,\sigma _{\rm{B}}^2}\, \mathop {\min }\limits_{\sigma _{\rm{E}}^2,\overline {\bf{U}} } {C_S}
 \end{equation}
 where
 \begin{equation}
 {C_S} = \left\{ {\begin{array}{*{20}{c}}
{C_S^ - }&{{\rm{0}} < \sigma _x^2 < P}\\
{C_S^ + }&{\sigma _x^2 = 0}\,\, {\rm or}\,\, {\sigma _x^2 = P}
\end{array}} \right.
 \end{equation}
 with
  \begin{equation}
  C_{\rm{S}}^ -  = {\left[ {{{\log }_2}\left( {\frac{{1 + \frac{{\sigma _x^2}}{{\sigma _{\rm{B}}^2}}{{\left\| {{{\bf{h}}_{\rm{B}}}} \right\|}^2}}}{{1 + {\rm{SIN}}{{\rm{R}}_{\rm{E}}}}}} \right)} \right]^ + }
   \end{equation}
   and
    \begin{equation}
C_S^ +  ={\left[ {\log _2}\left( {\frac{{1 + \frac{{\sigma _x^2}}{{\sigma _{\rm{B}}^2}}{{\left\| {{{\bf{h}}_{\rm{B}}}} \right\|}^2}}}{{1 + \frac{{\sigma _x^2}}{{\sigma _{\rm{E}}^2}}{{\left\| {{{{{\bf{H}}_{\rm{E}}}{{\bf{h}}_{\rm{B}}}} \mathord{\left/
 {\vphantom {{{{\bf{H}}_{\rm{E}}}{{\bf{h}}_{\rm{B}}}} {\left\| {{{\bf{h}}_{\rm{B}}}} \right\|}}} \right.
 \kern-\nulldelimiterspace} {\left\| {{{\bf{h}}_{\rm{B}}}} \right\|}}} \right\|}^2}}}} \right) \right]^ + }
   \end{equation}
 \end{theorem}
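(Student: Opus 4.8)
The plan is to assemble the MASR from three ingredients: Bob's reliable rate, the rate Eve can extract through the OGM filter, and the order in which the legitimate designer and the eavesdropper fix their parameters, glued together by Wyner's secrecy-rate formula $C_{\rm S}=\left[C_{\rm B}-C_{\rm E}\right]^{+}$. First I would read off Bob's channel: with ${\bf w}={\bf h}_{\rm B}/\left\|{\bf h}_{\rm B}\right\|$ and ${\bf W}_{\perp}$ lying in the null space of ${\bf h}_{\rm B}$, the receive model collapses to the scalar Gaussian channel ${\bf y}_{{\rm B},i}={\bf h}_{\rm B}{\bf w}x_i+{\bf n}_{{\rm B},i}$, giving $C_{\rm B}=\log_{2}\!\left(1+\tfrac{\sigma_x^{2}}{\sigma_{\rm B}^{2}}\left\|{\bf h}_{\rm B}\right\|^{2}\right)$ irrespective of the AN power. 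This is exactly the numerator appearing in both $C_{\rm S}^{-}$ and $C_{\rm S}^{+}$.

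Next I would handle Eve in the AN-active regime $0<\sigma_x^{2}<P$. By Proposition~1 the OGM filter produces the scalar estimate $\widehat{\overline{\bf x}}={\bf q}^{\rm H}{\bf Q}\overline{\bf Y}_{\rm E}$; plugging in the optimal ${\bf Z}^{*}={\bf X}^{\rm H}{\bf X}$ from Section~\ref{COGEM}, the residual error term is a linear mixture of the Gaussian AN symbols $\overline{\bf A}$ and the Gaussian noise $\overline{\bf N}_{\rm E}$, hence Gaussian and independent of $\overline{\bf x}$. The filter output is therefore an equivalent scalar Gaussian channel, so Eve's reliable rate is $C_{\rm E}=\log_{2}\!\left(1+{\rm SINR}_{\rm E}\right)$ with ${\rm SINR}_{\rm E}$ the closed-form MMSE output SINR derived above; substituting into $\left[C_{\rm B}-C_{\rm E}\right]^{+}$ yields $C_{\rm S}^{-}$. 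For the boundary values $\sigma_x^{2}\in\{0,P\}$ the AN term is absent: when $\sigma_x^{2}=P$ the rank test on $\widehat{\bf R}_{\rm S}$ reveals no AN, Eve reverts to the capacity-achieving combiner, and ${\rm SINR}_{\rm E}$ collapses to the matched-filter value $\tfrac{\sigma_x^{2}}{\sigma_{\rm E}^{2}}\big\|{\bf H}_{\rm E}{\bf h}_{\rm B}/\left\|{\bf h}_{\rm B}\right\|\big\|^{2}$; when $\sigma_x^{2}=0$ both $C_{\rm B}$ and $C_{\rm E}$ vanish. Both subcases are captured by $C_{\rm S}^{+}$, giving the stated piecewise expression for $C_{\rm S}$.

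Finally I would append the optimization. The legitimate transmitter controls the power split under $P=\sigma_x^{2}+(M-1)\sigma_a^{2}$ together with its noise budget, hence the outer $\max$ over $(\sigma_x^{2},\sigma_a^{2},\sigma_{\rm B}^{2})$; Eve's transform matrix $\overline{\bf U}$ is freely configurable and its noise level must be treated adversarially, hence the inner $\min$ over $(\sigma_{\rm E}^{2},\overline{\bf U})$. The hard part will be the middle step: one must justify rigorously that the MMSE linear-filter output is an information-lossless scalar surrogate for the rate Eve must be credited with --- so that $\log_{2}(1+{\rm SINR}_{\rm E})$ is precisely the term subtracted from $C_{\rm B}$ --- verify that the SINR formula coincides with the one induced by ${\bf Z}^{*}$, including the degenerate case where the null-space matrix $\Gamma$ vanishes and ${\bf Z}^{*}$ reduces to $\overline{\bf Z}^{*}$, and check that the $[\cdot]^{+}$ truncation and the $\max$--$\min$ ordering (Eve acting only after the design is fixed) are placed correctly.
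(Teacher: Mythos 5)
Your proposal follows essentially the same route as the paper: Bob's rate comes from the scalar channel under ${\bf w}={\bf h}_{\rm B}/\left\|{\bf h}_{\rm B}\right\|$, Eve's rate is $\log_{2}(1+{\rm SINR}_{\rm E})$ from the OGM filter when $0<\sigma_x^{2}<P$ and the CAC matched-filter SINR at the boundary, and the $\max$--$\min$ reflects which side controls which parameters --- which is exactly the short justification the paper gives after the theorem. The ``hard part'' you flag (whether the constrained linear-filter output SINR is the information-theoretically correct rate to credit Eve with) is indeed left entirely unaddressed by the paper as well, so your write-up is, if anything, more candid about the same gap.
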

 When AN exists or equivalently $0<\alpha<1$, OGM filter can be generated  as shown in the above section. In this case, the achievable secrecy rate can be easily expressed as $ C_{\rm{S}}^ -$. Otherwise,  Eve adopts the conventional CAC receiver and the achievable secrecy rate can be expressed as $ C_{\rm{S}}^ +$.  Note that the minimization operation is considered  since ${\sigma _{\rm{E}}^2}$, ${\overline {\bf{U}}}$ can be optimized by Eve  to reduce the secrecy rate.
 In what follows, we will discuss the effect of  equivalent noise levels at  Eve on the MASR performance.
\subsection{Noiseless Case}
 \begin{figure*}[ht]
\begin{equation}\setcounter{equation}{30}
\begin{array}{l}
{\Psi _1} = \sigma _x^2\sigma _a^2A +\sigma _x^2\sigma _{\rm{E}}^2B,{\Psi _2} = \sigma _x^2\sigma _a^4C,{\Psi _3} = 2\sigma _x^2\sigma _{\rm{E}}^4B + 2\sigma _x^2\sigma _a^2\sigma _{\rm{E}}^2D,\\
A = {\left| {{h_1}} \right|^2}{\left\| {{{\bf{h}}_4}} \right\|^2} + {\left| {{h_2}} \right|^2}{\left\| {{{\bf{h}}_3}} \right\|^2} - 2{\rm{Re}}\left\{ {h_1^{\rm{H}}{h_2}{\bf{h}}_4^{\rm{H}}{{\bf{h}}_3}} \right\},B = {\left| {{h_1}} \right|^2} + {\left| {{h_2}} \right|^2} - 2{\rm{Re}}\left\{ {h_1^{\rm{H}}{h_2}} \right\},\\
C = {\left\| {{{\bf{h}}_4}} \right\|^4}{\left| {{h_1}} \right|^2} + {\left\| {{{\bf{h}}_3}} \right\|^4}{\left| {{h_2}} \right|^2} + \left( {{{\left| {{h_1}} \right|}^2} + {{\left| {{h_2}} \right|}^2}} \right){\left\| {{{\bf{h}}_3}} \right\|^2}{\left\| {{{\bf{h}}_4}} \right\|^2} - 2\left( {{{\left\| {{{\bf{h}}_3}} \right\|}^2} + {{\left\| {{{\bf{h}}_4}} \right\|}^2}} \right){\rm{Re}}\left\{ {h_1^{\rm{H}}{h_2}{\bf{h}}_4^{\rm{H}}{{\bf{h}}_3}} \right\},\\
D = {\left| {{h_1}} \right|^2}{\left\| {{{\bf{h}}_4}} \right\|^2} + {\left| {{h_2}} \right|^2}{\left\| {{{\bf{h}}_3}} \right\|^2} - 2{\mathop{\rm Re}\nolimits} \{ h_1^{\rm{H}}{h_2}{\bf{h}}_4^{\rm{H}}{{\bf{h}}_3}\}  + \left( {{{\left| {{h_1}} \right|}^2} + {{\left| {{h_2}} \right|}^2}} \right){\rm{Re}}\left\{ {{\bf{h}}_4^{\rm{H}}{{\bf{h}}_3}} \right\} - \left( {{{\left\| {{{\bf{h}}_3}} \right\|}^2} + {{\left\| {{{\bf{h}}_4}} \right\|}^2}} \right){\rm{Re}}\left\{ {h_1^{\rm{H}}{h_2}} \right\}.
\end{array}
\end{equation}
\vspace{-15pt}
\end{figure*}

In this case, the equivalent noise level is arbitrary small. An intuitive interpretation is that Bob or  Eve is much closer to Alice~\cite{Negi} and/ or have very low noise level. In this context, we have the following theorem:
  \begin{theorem}
 With $\sigma _{\rm{E}}^2 \to 0$, the MASR is equal to zero.
\end{theorem}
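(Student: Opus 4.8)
The plan is to read the saddle-value of Theorem~1 off in the limit $\sigma_{\rm E}^2\to 0$. Since both branches of $C_S$ are wrapped in $[\,\cdot\,]^+$ we always have $C_S\ge 0$, so it suffices to show that for \emph{every} admissible transmit configuration $(\sigma_x^2,\sigma_a^2,\sigma_{\rm B}^2)$ (with $\sigma_{\rm B}^2>0$ and $0\le\sigma_x^2\le P$) and any fixed transform matrix $\overline{\mathbf{U}}$ one has $C_S\to 0$ along $\sigma_{\rm E}^2\downarrow 0$; then the inner minimization equals $0$ for each choice in the outer maximization, whence $\mathrm{MASR}=0$. The argument then splits according to whether artificial noise is active.

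First I would handle the AN-active branch $0<\sigma_x^2<P$, where $C_S=C_S^-$ and the only $\sigma_{\rm E}^2$-dependent quantity in the $\log$-argument is $1+\mathrm{SINR}_{\rm E}$. Using the closed form of $\mathrm{SINR}_{\rm E}$ built on $(\mathbf{H}\mathbf{H}^{\rm H}+\sigma_{\rm E}^2\mathbf{I}_2)^{-1}$, the first observation is that the $2\times M$ matrix $\mathbf{H}$ has full row rank for almost every channel realization: $[\mathbf{w}\;\mathbf{W}_\bot]$ is a basis of $\mathbb{C}^M$ (because $\mathbf{h}_{\rm B}\mathbf{w}\ne 0$ while the columns of $\mathbf{W}_\bot$ span $\mathrm{null}(\mathbf{h}_{\rm B})$), $\mathbf{H}_{\rm E}$ has i.i.d.\ Gaussian entries, so $\mathbf{H}_{\rm E}[\mathbf{w}\;\mathbf{W}_\bot]$ has rank $2$ almost surely, and scaling the two blocks by $\sigma_x,\sigma_a\ne 0$ and multiplying on the left by the unitary $\overline{\mathbf{U}}$ preserves rank; hence $\mathbf{H}\mathbf{H}^{\rm H}\succ\mathbf{0}$ a.s. Letting $\sigma_{\rm E}^2\to 0$, the numerator of $\mathrm{SINR}_{\rm E}$ tends to $\big(\mathbf{H}(:,1)^{\rm H}(\mathbf{H}\mathbf{H}^{\rm H})^{-1}\mathbf{H}(:,1)\big)^2>0$ while its denominator is $\sigma_{\rm E}^2\big(\mathbf{H}(:,1)^{\rm H}(\mathbf{H}\mathbf{H}^{\rm H})^{-2}\mathbf{H}(:,1)+o(1)\big)$ with a strictly positive leading coefficient, so $\mathrm{SINR}_{\rm E}=\Theta(1/\sigma_{\rm E}^2)\to+\infty$; the $\log$-argument of $C_S^-$ then tends to $0^+$, $\log_2(\cdot)\to-\infty$, and $C_S^-=[\,\cdot\,]^+\to 0$. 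The same conclusion also follows from $\mathrm{SINR}_{\rm E}=\Psi_1^2/\big(\sigma_{\rm E}^2(\Psi_2+\Psi_3)\big)$: as $\sigma_{\rm E}^2\to 0$ one has $\Psi_1\to\sigma_x^2\sigma_a^2 A$ and $\Psi_2+\Psi_3\to\sigma_x^2\sigma_a^4 C$, with $A,C>0$ a.s.\ (they are Gram-type expressions, non-negative by Cauchy--Schwarz and vanishing only on a measure-zero set), giving $\mathrm{SINR}_{\rm E}\sim\sigma_x^2 A^2/(\sigma_{\rm E}^2 C)$.

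Next I would dispatch the no-AN branch $\sigma_x^2\in\{0,P\}$, where $C_S=C_S^+$ (and Eve falls back on the CAC receiver). If $\sigma_x^2=0$, the numerator and denominator of the $\log$-argument both equal $1$, so $C_S^+=[\log_2 1]^+=0$ outright, with no limit needed. If $\sigma_x^2=P>0$, the denominator equals $1+\tfrac{P}{\sigma_{\rm E}^2}\|\mathbf{H}_{\rm E}\mathbf{w}\|^2$, which diverges as $\sigma_{\rm E}^2\to 0$ because $\mathbf{H}_{\rm E}\mathbf{w}\ne\mathbf{0}$ a.s., while the numerator stays bounded; hence $C_S^+\to 0$. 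Combining the two branches with the reduction of the first paragraph, the inner minimization is $0$ for every transmit configuration, and therefore $\mathrm{MASR}=0$.

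I expect the \textbf{main obstacle} to be the non-degeneracy step that underpins everything: the blow-up $\mathrm{SINR}_{\rm E}=\Theta(1/\sigma_{\rm E}^2)$ drives the whole proof, and it would fail if $\mathbf{H}\mathbf{H}^{\rm H}$ --- equivalently the coefficients $A$ and $C$ appearing in the $\mathrm{SINR}_{\rm E}$ expansion --- degenerated. Dealing with this cleanly requires the rank computation for $\mathbf{H}$ above together with the continuity of the channel distribution, so that the exceptional set on which $\mathbf{H}$ loses rank or $\mathbf{H}(:,1)=\mathbf{0}$ has probability zero; once that is in hand, the rest is a direct substitution into the $C_S^-$ and $C_S^+$ formulas of Theorem~1 and an elementary limit.
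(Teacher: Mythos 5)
Your proof is correct, and it rests on the same engine as the paper's: the scaling $\mathrm{SINR}_{\rm E}\,\sigma_{\rm E}^2/\sigma_x^2\to A^2/C$ (the paper's Eq.~(35)), so that $\mathrm{SINR}_{\rm E}=\Theta(1/\sigma_{\rm E}^2)$ blows up and crushes the $\log$-argument. Where you diverge is the logical route around that engine. The paper does not take a pointwise limit over all configurations: it argues that $C_{\rm S}^-$ is monotone in $\sigma_x^2$ (increasing iff $\|\mathbf{h}_{\rm B}\|^2/\sigma_{\rm B}^2>A^2/(\sigma_{\rm E}^2C)$), concludes the maximizer sits at the boundary $\sigma_x^2=P$ where the rate switches to $C_{\rm S}^+$, and then observes $C_{\rm S}^+\to 0$. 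You instead show $C_S\to 0$ for \emph{every} fixed transmit configuration — interior points included, via the SINR blow-up — and push the limit through the inner minimization using $C_S\ge 0$. Your route buys two things. First, it sidesteps a wrinkle in the paper's monotonicity step: for $\sigma_{\rm E}^2$ small the condition $\|\mathbf{h}_{\rm B}\|^2/\sigma_{\rm B}^2>A^2/(\sigma_{\rm E}^2C)$ eventually fails, so $C_{\rm S}^-$ is actually decreasing and the supremum migrates to $\sigma_x^2\to 0$ rather than $P$ (both boundaries give zero, but the paper only invokes $P$); your pointwise argument is indifferent to which case occurs. Second, you make explicit the non-degeneracy facts the paper leaves implicit — $\mathbf{H}$ has full row rank almost surely and $A,C>0$ a.s.\ by the Cauchy--Schwarz structure of Eq.~(30) — which is exactly the step on which the blow-up, and hence both proofs, actually depend. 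The paper's route is shorter when it works; yours is the more airtight of the two.
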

  \begin{proof}
let us consider the case where ${\rm{0}} < {\sigma _x^2 < P}$. The  SINR of Eve will satisfy :
  \begin{equation}\setcounter{equation}{35}
  \frac{{{\rm{SIN}}{{\rm{R}}_{\rm{E}}}\sigma _{\rm{E}}^2}}{{\sigma _x^2}} \to {\frac{{{A^2}}}{C}}
  \end{equation}
The secrecy rate is  given by:
  \begin{equation}
 C_{\rm{S}}^ -  = \left[ {{{\log }_2}\left( {\frac{{1 + \frac{{\sigma _x^2}}{{\sigma _{\rm{B}}^2}}{{\left\| {{{\bf{h}}_{\rm{B}}}} \right\|}^2}}}{{1 + \frac{{\sigma _x^2}}{{\sigma _{\rm{E}}^2}}\frac{{{A^2}}}{C}}}} \right)} \right]_{{\rm{0}} < \sigma _x^2 < P}^{\rm{ + }},
   \end{equation}
 Note that ${C_{\rm{S}}^ -}$ increases monotonically with ${\sigma _x^2}$  when $\frac{{{{\left\| {{{\bf{h}}_{\rm{B}}}} \right\|}^2}}}{{\sigma _{\rm{B}}^{\rm{2}}}} > \frac{{{A^{\rm{2}}}}}{{\sigma _{\rm{E}}^{\rm{2}}C}}$, and decreases monotonically, otherwise. Therefore, the MASR is achieved at ${\sigma _x^2}=P$.  At this point, the achievable  secrecy rate is transformed into ${C_{\rm{S}}^ +}$. Obviously,  ${C_{\rm{S}}^ +}$ is equal to be zero when  $\sigma _{\rm{E}}^2 \to 0$. The proof is completed.



\end{proof}
\subsection{Noisy Case}
In this situation,  the equivalent noise can not be ignored at both  Bob and Eve whose noise level and distance to Alice can be  arbitrarily managed.  Besides,  it is noted that  the variables, such as ${h_1}$,${h_2}$,${{\bf H}_3}$, and ${{\bf H}_4}$, in ${C_{\rm{S}}^ -}$,  depend on  Eve's CSI,  Alice's beamformer and $\overline {\bf{U}} $.  As a feasible assumption, Eve is enabled to obtain those  available information. Especially,   ${\sigma _{\rm{E}}^2}$ and  ${\overline {\bf{U}}}$ are both  controllable by Eve.  Therefore,  the resulted SINR of Eve can be modified and  redesigned by itself. Then, we have the following proposition:
 \begin{proposition}
 Given perfect CSI  and Alice's beamformer at noisy Eve, there always exist  variables   ${\sigma _{\rm{E}}^2}$, ${\overline {\bf{U}}}$ such that ${C_{\rm{S}}^ -}$ can be restricted to be  zero; Particularly,
 ${C_{\rm{S}}^ -}$ at high SINR is equal to zero $\forall P \in \left( {0,\infty } \right)$  if  the following sufficient condition is satisfied:
 \begin{equation}
\sigma _{\rm{E}}^{\rm{2}}\left( {\frac{{{B^{\rm{2}}}}}{{{A^{\rm{2}}}}} - \frac{{2B}}{C}} \right) \le 2\left( {\frac{B}{A}- \frac{D}{C}} \right), \frac{B}{A} < \frac{D}{C},
\end{equation}
\begin{equation}
\frac{{\sigma _{\rm{E}}^2}}{{\sigma _{\rm{B}}^2}} < \frac{{{{\left\| {{{\bf{H}}_{\rm{E}}}{{{\bf{h}}_{\rm{B}}^{\rm{H}}} \mathord{\left/
 {\vphantom {{{\bf{h}}_{\rm{B}}^{\rm{H}}} {\left\| {{{\bf{h}}_{\rm{B}}}} \right\|}}} \right.
 \kern-\nulldelimiterspace} {\left\| {{{\bf{h}}_{\rm{B}}}} \right\|}}} \right\|}^2}}}{{{{\left\| {{{\bf{h}}_{\rm{B}}}} \right\|}^2}}}
 \end{equation}
whichs make log function  to be monotonically increasing with  $\sigma _{{x}}^{\rm{2}}$, or
 \begin{equation}
 \sigma _{\rm{E}}^{\rm{2}}\left( {\frac{{{B^{\rm{2}}}}}{{{A^{\rm{2}}}}} - \frac{{2B}}{C}} \right) \ge 2\left( {\frac{B}{A}-\frac{D}{C}} \right),\frac{B}{A}>\frac{D}{C}
 \end{equation}
which makes log function  to be monotonically decreasing with  $\sigma _{{x}}^{\rm{2}}$, or
 \begin{equation}
 \frac{{{A^{\rm{2}}}}}{{\sigma _{\rm{E}}^{\rm{2}}C}} \ge \frac{{{{\left\| {{{\bf{h}}_{\rm{B}}}} \right\|}^2}}}{{\sigma _{\rm{B}}^{\rm{2}}}},\frac{D}{C} = \frac{B}{A}
 \end{equation}
which makes log function to be irrelevant with the transmission power. The values of $A$, $B$, $C$, $D$ can be shown in equation (30).
 \end{proposition}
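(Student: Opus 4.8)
The plan is to turn ``\(C_S=0\)'' into a single scalar inequality, substitute the closed-form \({\rm SINR}_{\rm E}\) of (29)--(30), and reduce the whole claim to the monotonicity of one rational function of the power split; the three displayed sufficient conditions should then fall out as the three sign regimes of the relevant derivative.

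\textbf{Step 1.} First I would note that since \(C_S=[\,\cdot\,]^{+}\ge 0\), the max--min of Theorem~1 is zero as soon as Eve can, for every admissible triple \((\sigma_x^2,\sigma_a^2,\sigma_{\rm B}^2)\), pick \((\sigma_{\rm E}^2,\overline{\bf U})\) driving the argument of \(\log_2\) down to at most \(1\). For the boundary splits \(\sigma_x^2\in\{0,P\}\) (no AN, CAC receiver, \(C_S=C_S^{+}\)) this is \(\log_2(1/1)=0\) at \(\sigma_x^2=0\), and at \(\sigma_x^2=P\) it holds exactly when \(\sigma_{\rm E}^2/\sigma_{\rm B}^2\) is small enough --- the second displayed inequality --- which Eve meets by lowering \(\sigma_{\rm E}^2\) (she owns her noise floor \(\widetilde\sigma_{\rm E}^2\) and path loss \(\beta_{\rm E}\)). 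So the claim reduces to the interior case \(\sigma_x^2\in(0,P)\), where \(C_S=C_S^{-}\); writing \(\phi(\sigma_x^2)\) for the \(\log_2\)-term of \(C_S^{-}\), ``\(C_S^{-}=0\)'' is simply \(\phi\le 0\), i.e.\ \({\rm SINR}_{\rm E}\ge(\sigma_x^2/\sigma_{\rm B}^2)\|{\bf h}_{\rm B}\|^2\). (The mere existence of a good \((\sigma_{\rm E}^2,\overline{\bf U})\) is already cheap: taking \(\sigma_{\rm E}^2\) small pushes \({\rm SINR}_{\rm E}\to\infty\) as in the noiseless analysis of Theorem~2 and makes \(\phi<0\) on \((0,P)\); the displayed conditions are the quantitative high-SINR version valid for \emph{every} \(P\).)

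\textbf{Step 2.} Next I would substitute \({\rm SINR}_{\rm E}=\Psi_1^2/[\sigma_{\rm E}^2(\Psi_2+\Psi_3)]\) from (29)--(30) together with \(\sigma_x^2=\alpha P\), \(\sigma_a^2=(1-\alpha)P/(M-1)\). The factor \(\sigma_x^2\) cancels in \((\sigma_x^2\|{\bf h}_{\rm B}\|^2/\sigma_{\rm B}^2)/{\rm SINR}_{\rm E}\), and at high SINR both ``\(1+\)'' terms drop, so ``\(\phi\le 0\) on all of \((0,P)\)'' becomes \(\sup_v r(v)\le 1\) with
\[
r(v)\;=\;\frac{\|{\bf h}_{\rm B}\|^2\,\sigma_{\rm E}^2}{\sigma_{\rm B}^2}\cdot\frac{v^{2}C+2v\sigma_{\rm E}^2 D+2\sigma_{\rm E}^4 B}{\bigl(vA+\sigma_{\rm E}^2 B\bigr)^{2}},\qquad v:=\sigma_a^2\in\Bigl(0,\tfrac{P}{M-1}\Bigr).
\]
Because the conclusion must hold for all \(P\in(0,\infty)\), \(v\) effectively sweeps \((0,\infty)\); note \(M\) has disappeared --- it only rescaled the range of \(v\) --- so uniformity in the number of transmit antennas is automatic.

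\textbf{Step 3.} Then I would study \(r\). Differentiation shows \(r'(v)\) has the sign of the affine map \(\ell(v)=v(CB-DA)+\sigma_{\rm E}^2 B(D-2A)\), with \(r(0^{+})=2\|{\bf h}_{\rm B}\|^2\sigma_{\rm E}^2/(\sigma_{\rm B}^2 B)\) and \(r(\infty)=\|{\bf h}_{\rm B}\|^2\sigma_{\rm E}^2 C/(\sigma_{\rm B}^2 A^2)\). Splitting on the sign of \(CB-DA\), i.e.\ on \(\tfrac{B}{A}\) vs.\ \(\tfrac{D}{C}\): (i) if \(\tfrac{B}{A}<\tfrac{D}{C}\), the first displayed inequality is exactly what forces \(\ell\le 0\) on the relevant range, so \(r\) is non-increasing in \(v\), \(\phi\) is non-decreasing in \(\sigma_x^2\), and \(\sup_{(0,P)}\phi\) is approached as \(\sigma_x^2\uparrow P\), where the rate is the CAC value killed by the second condition; (ii) if \(\tfrac{B}{A}>\tfrac{D}{C}\), the third displayed inequality forces \(\ell\ge 0\), so \(\phi\) is non-increasing in \(\sigma_x^2\) with \(\sup_{(0,P)}\phi\) approached as \(\sigma_x^2\downarrow 0\), where \(\phi\to\log_2(1/1)=0\), hence \(\phi<0\) on \((0,P)\); (iii) if \(\tfrac{D}{C}=\tfrac{B}{A}\) (so \(CB=DA\)), the rational factor in \(r\) is flat in \(v\) and the last displayed condition is literally \(r\le 1\). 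In every branch \(\sup_v r\) lies at an endpoint (or \(r\) is constant), so the needed ``\(\le 1\)'' is verbatim one of the three conditions; the branches exhaust all cases; and Eve can always enter a good branch, since \(\overline{\bf U}\) ranges over all \(2\times 2\) unitaries (tuning \((A,B,C,D)\)) and \(\sigma_{\rm E}^2\) is hers to shrink. This yields \(\max_{\sigma_x^2,\sigma_a^2,\sigma_{\rm B}^2}\min_{\sigma_{\rm E}^2,\overline{\bf U}}C_S=0\) for every \(P\).

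\textbf{Main obstacle.} The hard part will be Step~3: actually performing the differentiation of \(r\) (or of the exact \(\phi\), before the high-SINR reduction) and checking that the sign conditions and the endpoint values of \(r\) match the three displayed conditions \emph{verbatim}, and in particular verifying that any interior critical point of \(r\) is a minimum rather than a maximum, so that the supremum genuinely sits at an endpoint. A secondary point to nail down is that the high-SINR approximation is harmless uniformly in \(\sigma_x^2\) --- it degrades only near \(\sigma_x^2=0\) and \(\sigma_x^2=P\), but there \(\phi\to 0\) (resp.\ the problem reverts to the \(C_S^{+}\)/CAC formula of Step~1), so nothing is lost --- and that each branch's inequalities are jointly realizable by a single \((\sigma_{\rm E}^2,\overline{\bf U})\). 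The remaining algebra is routine.
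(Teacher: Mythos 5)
Your proposal follows essentially the same route as the paper's proof: reduce $C_{\rm S}^-$ at high SINR to an explicit rational function of the power split via (29)--(30), then split into the three monotonicity regimes (increasing, decreasing, constant in $\sigma_x^2$) and match them to conditions (37)--(38), (39), (40), handling the endpoints $\alpha=1$ (where $C_{\rm S}^+$ is killed by the second displayed inequality) and $\alpha=0$ (no transmission) exactly as the paper does. Your Step~3 is in fact more explicit than the paper's argument, which merely asserts the correspondence between the derivative's sign and the displayed conditions; the only cosmetic differences are that you parametrize by $v=\sigma_a^2$ (reversing the direction of monotonicity relative to $\alpha$) and that your rational function carries plus signs where the paper's displayed high-SINR expression carries minus signs, a discrepancy traceable to the sign conventions in (30) rather than to any substantive divergence.
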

 \begin{proof}
 Let us focus on the  monotonicity of the inner log function in  ${C_{\rm{S}}^ -}$. According to its  high-order derivative, we can prove that there exists available variables ${\sigma _{\rm{E}}^2}$, ${\overline {\bf{U}}}$ such that the log function  is monotone. However,  it is exhausting to obtain a sufficient and necessary condition in the form of an explicit expression of those variables. Particularly in high SINR and $0<\alpha<1$, we transform the original ${C_{\rm{S}}^ -}$ into  $C_{\rm{S}}^ -  = {\left[ {{{\log }_2}\left( {\frac{{\sigma _{\rm{E}}^2}}{{\sigma _{\rm{B}}^2}}{{\left\| {{{\bf{h}}_{\rm{B}}}} \right\|}^2}\frac{{\sigma _a^4C - 2\sigma _a^2\sigma _{\rm{E}}^2D + 2\sigma _{\rm{E}}^4B}}{{\sigma _a^4{A^2} - 2\sigma _{\rm{E}}^2AB\sigma _a^2 + \sigma _{\rm{E}}^4{B^2}}}} \right)} \right]^ + }$. Now we introduce the sufficient condition to achieve zero secrecy rate.

 When the log function  is controlled by Eve  to be  monotonically increasing with $\alpha$,  the condition (37), (38) has to be satisfied.  In this situation, the
  achievable secrecy rate is gradually increased with the increase of transmission  power allocated for information bearing signals and the maximum will be achieved at $\alpha=1$. Thus the condition  $\frac{{\sigma _{\rm{E}}^2}}{{\sigma _{\rm{B}}^2}} < \frac{{{{\left\| {{{\bf{H}}_{\rm{E}}}{{{{\bf{h}}_{\rm{B}}}} \mathord{\left/
 {\vphantom {{{{\bf{h}}_{\rm{B}}}} {\left\| {{{\bf{h}}_{\rm{B}}}} \right\|}}} \right.
 \kern-\nulldelimiterspace} {\left\| {{{\bf{h}}_{\rm{B}}}} \right\|}}} \right\|}^2}}}{{{{\left\| {{{\bf{h}}_{\rm{B}}}} \right\|}^2}}}$ should be further satisfied to make $C_{\rm{S}}^ + \left| {_{\alpha=1}} \right.$ zero.

  Otherwise, in order  to make the log function  monotonically decrease,  we determine that  the condition (39) should be satisfied so that the maximum is achieved at $\alpha =0$. Obviously, $\alpha =0$ means that there is no transmission process and naturally indicates  that the available secrecy rate is equal to zero.

 Finally,  if log function in $0<\alpha<1$  is a constant with the variable $\alpha$, we determine  the condition (40) to make $C_{\rm{S}}^ - $ zero.  The proof is completed.
  \end{proof}
 Moreover, we define the available variable pair  as $\left( {\sigma _{\rm{E}}^2,\overline {\bf{U}}} \right)$ and collect those pairs that satisfy the above proposition to construct an achievable variable region (AVR) for Eve. Therefore, we have the following theorem.
  \begin{theorem}
  In noisy case, there always exists an AVR such that  MASR is  equal to zero.
   \end{theorem}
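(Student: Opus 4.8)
The plan is to obtain Theorem 3 as a corollary of Proposition 2 together with the min--max structure of the MASR in (31). Write the AVR as the set $\mathcal{R}\subseteq(0,\infty)\times U(2)$ of pairs $(\sigma_{\rm E}^2,\overline{\bf U})$ for which at least one of the sufficient conditions (37)--(40) holds. The statement then reduces to two sub-claims: (a) $\mathcal{R}\neq\emptyset$ for (almost) every channel realization, so that an AVR genuinely exists; and (b) whenever Eve operates with parameters in $\mathcal{R}$, one has $C_S\equiv 0$ over Alice's whole feasible set of $(\sigma_x^2,\sigma_a^2,\sigma_{\rm B}^2)$, which forces $\max\min C_S=0$.

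Sub-claim (b) is the easy direction and is essentially already contained in the proof of Proposition 2. Fix $(\sigma_{\rm E}^2,\overline{\bf U})\in\mathcal{R}$. By the case analysis there, the relevant logarithm inside $C_S^-$ is monotone (or constant) in $\alpha$ on $(0,1)$ and its value tends to $0$ as $\alpha\to 0^+$, so $C_S^-=0$ throughout $0<\sigma_x^2<P$; at $\sigma_x^2=0$ there is no information transmission and $C_S^+=0$, while at $\sigma_x^2=P$ the accompanying endpoint condition (38) (when applicable) makes the numerator of $C_S^+$ no larger than its denominator, so again $C_S^+=0$. Since $C_S\ge 0$ by the $[\,\cdot\,]^+$ operation, this yields $C_S\equiv 0$ on Alice's feasible set once Eve's parameters lie in $\mathcal{R}$. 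Hence $\min_{\sigma_{\rm E}^2,\overline{\bf U}}C_S\le 0$; combined with $C_S\ge 0$ the inner minimum is $0$, and the outer maximization of the constant $0$ yields zero, i.e.\ the MASR vanishes.

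The substantive part is sub-claim (a): $\mathcal{R}$ is non-empty, and in fact has non-empty interior, so that the word ``region'' is justified. The leverage is that $\overline{\bf U}$ is freely configurable over the $4$-real-dimensional compact group $U(2)$, so the CSI-dependent quantities $A,B,C,D$ of (30) are real-analytic functions of a unitary parametrization $\overline{\bf U}(\theta)$, and $\sigma_{\rm E}^2$ may be scaled at will. The three sub-cases of Proposition 2 exhaust the trichotomy $B/A<D/C$, $B/A>D/C$, $B/A=D/C$, so it suffices to tune $\overline{\bf U}$ into one branch and then pick $\sigma_{\rm E}^2$ compatibly. The cleanest branch to target is (40) (the logarithm is $\alpha$-independent there, and its sole surviving requirement $A^2/(\sigma_{\rm E}^2 C)\ge\|{\bf h}_{\rm B}\|^2/\sigma_{\rm B}^2$ holds for every sufficiently small $\sigma_{\rm E}^2>0$, echoing the noiseless disaster of Theorem 2): one shows that $\theta\mapsto B/A-D/C$ is not of constant sign over $U(2)$ for generic CSI, whence by the intermediate value theorem there is a $\theta_0$ with $B/A=D/C$. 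Should that branch be obstructed, one falls back to (39) (logarithm decreasing, optimum at $\alpha=0$, no endpoint condition), which asks only for a $\overline{\bf U}$ with $B/A>D/C$ and $B^2/A^2>2B/C$ followed by a sufficiently large $\sigma_{\rm E}^2$. Continuity in $(\theta,\sigma_{\rm E}^2)$ then upgrades any such single point to an open neighbourhood in $U(2)\times(0,\infty)$, so $\mathcal{R}$ is a genuine region; together with (b) this proves the theorem.

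The main obstacle is precisely the genericity statement invoked in (a): one must rule out, for all but a measure-zero set of realizations of ${\bf H}_{\rm E}$ and ${\bf h}_{\rm B}$, the degeneracies $A=0$, $C=0$, and $B/A-D/C$ having a fixed sign over all of $U(2)$, and then confirm that the magnitude of $\sigma_{\rm E}^2$ demanded by the chosen branch does not collide with the remaining inequalities --- which is why branch (40), needing only a small $\sigma_{\rm E}^2$, is the safer target. These non-degeneracy facts should follow by inspecting (28)--(30) and using that the entries of ${\bf H}_{\rm E}$ and ${\bf h}_{\rm B}$ are i.i.d.\ complex Gaussian, so that a polynomial identity in them forced by any such degeneracy would have to vanish identically, which it does not. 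Everything else --- convexity of the SDP, the closed forms of $\lambda^{*}$ and ${\bf Z}^{*}$, and the SINR formula (30) --- is already established earlier in the paper and is only invoked, not re-proved.
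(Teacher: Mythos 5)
Your proof takes essentially the same route as the paper: Theorem 3 is stated there as an immediate corollary of Proposition 2, with a one-sentence justification that the pairs $(\sigma_{\rm E}^2,\overline{\bf U})$ satisfying conditions (37)--(40) constitute the AVR and force $C_{\rm S}^-$ (and, at the endpoints $\alpha\in\{0,1\}$, $C_{\rm S}^+$) to zero, so the min--max in (31) vanishes. Your sub-claim (a) --- the non-emptiness and openness of that set, argued via the intermediate value theorem over $U(2)$ and genericity of the Gaussian channel realizations --- is additional rigor that the paper simply asserts inside Proposition 2 (``there always exist variables''), so your write-up is, if anything, more complete than the original.
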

   It is a natural result from the Proposition 2 since the $C_{\rm{S}}^ - $ can be reduced to be zero when $0<\alpha<1$  and otherwise the $C_{\rm{S}}^ +$ can be controlled  to be zero by Eve through a suitable configuration of  ${\sigma _{\rm{E}}^2}$.
\section{Simulation Results}
\label{Simulation Results}
\begin{table}[t]
\caption{\label{tab:test}Simulation Parameters and Values}
\vspace{-15pt}
\begin{center}
\footnotesize
\begin{tabular*}{9cm}{@{\extracolsep{\fill}}ll}
  \toprule
 \multicolumn{1}{c}{ Simulation Parameters}  & \multicolumn{1}{c}{ Values }\\
  \midrule
    \multicolumn{1}{c}{Path-loss model}&\multicolumn{1}{c}{Eq. (26)}
  \\
     \multicolumn{1}{c}{Small-scale fading model}&\multicolumn{1}{c}{${\cal C}{\cal N}\left( {0,{{\bf{I}}_{{M}}}} \right)$}
  \\
      \multicolumn{1}{c}{Length $N$ of data }&\multicolumn{1}{c}{200 (symbols)}
  \\
  \multicolumn{1}{c}{Noiseless case: $d_{\rm B}$, $d_{\rm E}$}&\multicolumn{1}{c}{$\left[ {0,{{10}^4}} \right]$ (m)}
  \\
   \multicolumn{1}{c}{Noisy case: available region of $x$, $y$, $z$}&\multicolumn{1}{c}{$\left({ -\infty ,0} \right]$, $\left[ {0, \infty } \right)$, $\left[ {0,4} \right]$}
  \\
    \multicolumn{1}{c}{Maximum transmit power $P$}&\multicolumn{1}{c}{$35\left( {{\rm{dBm}}} \right)$}
  \\
  \multicolumn{1}{c}{Noisy case: noise power ${\widetilde \sigma _{\rm{B}}^2}$ (10MHZ bandwidth)}&\multicolumn{1}{c}{$ - 102\left( {{\rm{dBm}}} \right)$}
  \\
  \bottomrule
  \label{Table}
  \vspace{-15pt}
 \end{tabular*}
\end{center}
\end{table}
 \begin{figure*}[ht]
\begin{equation}\setcounter{equation}{43}
{{\bf{H}}_{\rm{E}}} = \left[ {\begin{array}{*{20}{c}}
{{\rm{0}}{\rm{.60 }} - {\rm{ 0}}{\rm{.80i}}}&{{\rm{0}}{\rm{.06  +  0}}{\rm{.61i}}}&{{\rm{1}}{\rm{.05 }} - {\rm{ 0}}{\rm{.28i}}}&{ - {\rm{0}}{\rm{.49  +  0}}{\rm{.16i}}}\\
{ - {\rm{0}}{\rm{.36  +  0}}{\rm{.47i}}}&{ - {\rm{0}}{\rm{.42  +  0}}{\rm{.29i}}}&{{\rm{0}}{\rm{.09 +  1}}{\rm{.53i}}}&{{\rm{0}}{\rm{.10 }} - {\rm{ 0}}{\rm{.59i}}}
\end{array}} \right]
\end{equation}
\vspace{-15pt}
 \end{figure*}
  \begin{figure*}[ht]
\begin{equation}
{{\bf{H}}_{\rm{L}}} = \left[ {\begin{array}{*{20}{c}}
{ - {\rm{0}}{\rm{.71  +  0}}{\rm{.12i}}}&{{\rm{0}}{\rm{.41  +  0}}{\rm{.94i}}}&{{\rm{0}}{\rm{.20  +  0}}{\rm{.91i}}}&{ - {\rm{1}}{\rm{.44  +  1}}{\rm{.49i}}}
\end{array}} \right],{\overline {\bf{U}}^{*}}  = \left[ {\begin{array}{*{20}{c}}
{ - \sqrt {{\rm{0}}{\rm{.2257}}} }&{ - \sqrt {{\rm{0}}{\rm{.7743}}} }\\
{ - \sqrt {{\rm{0}}{\rm{.7743}}} }&{\sqrt {{\rm{0}}{\rm{.2257}}} }
\end{array}} \right]
\end{equation}
\vspace{-15pt}
 \end{figure*}
In this section, we evaluate  the secrecy performance of $M$-1-2 wiretap channel model in the context of AN-based transmission mode at Alice. We consider the average MASR performance under  ergodic scenario  and also  determine a operation  region achieving zero secrecy rate under one  channel block.  Generally, the path loss  expressed in decibels   satisfies the following:
\begin{equation}\setcounter{equation}{41}
PL\left( d \right) = PL\left( {{d_0}} \right) + 10n{\log _{10}}\left( {\frac{d}{{{d_0}}}} \right)
\end{equation}
where $d_0$ is the reference point at 1 km and $n$ is known as the path loss exponent. In this paper, we consider a practical 3GPP urban model where $PL\left( {{d_0}} \right) = 148.1$ and $n = 3.76$.  ${\beta _{\rm{B}}}$  and ${\beta _{\rm{E}}}$ can be derived from ${d _{\rm{B}}}$ and  ${d _{\rm{E}}}$ based on equation (41). To better characterize the  operation region of noisy Eve, we define  several metrics as follows:
 \begin{equation}
x \buildrel \Delta \over = {\log _{10}}\left( {{{\widetilde \sigma _{\rm{E}}^2} \mathord{\left/
 {\vphantom {{\widetilde \sigma _{\rm{E}}^2} {\widetilde \sigma _{\rm{B}}^2}}} \right.
 \kern-\nulldelimiterspace} {\widetilde \sigma _{\rm{B}}^2}}} \right),y \buildrel \Delta \over = {\log _{10}}\left( {{{{\beta _{\rm{E}}}} \mathord{\left/
 {\vphantom {{{\beta _{\rm{E}}}} {{\beta _{\rm{B}}}}}} \right.
 \kern-\nulldelimiterspace} {{\beta _{\rm{B}}}}}} \right),z \buildrel \Delta \over = {\log _{10}}\left( {{d_{\rm{B}}}} \right)
\end{equation}
When Bob is located in  different regions $z$,  $x$ and $y$ represent the  measures that  noisy Eve could take  to reduce  the secrecy rate to be zero. The detailed simulation parameters can be shown in Table~\ref{Table}.
\begin{figure}[!t]
\centering \includegraphics[width=1.0\linewidth]{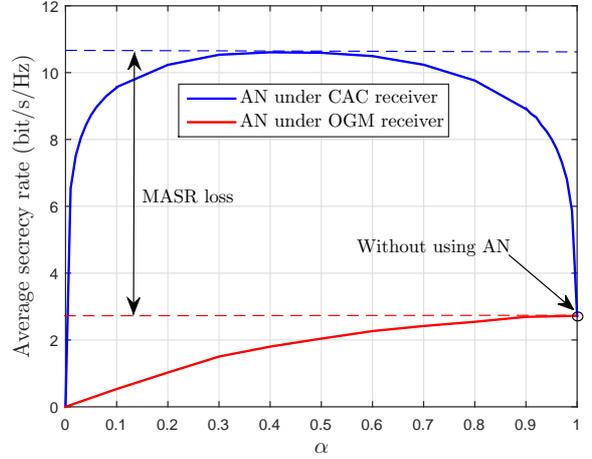}
\caption{Average  secrecy rate  versus  the power ratio of $\alpha$ under two types of receivers in the noisy environment.}
\label{Simulation0}
\vspace{-10pt}
\end{figure}
\begin{figure}[!t]
\centering\includegraphics[width=1.0\linewidth]{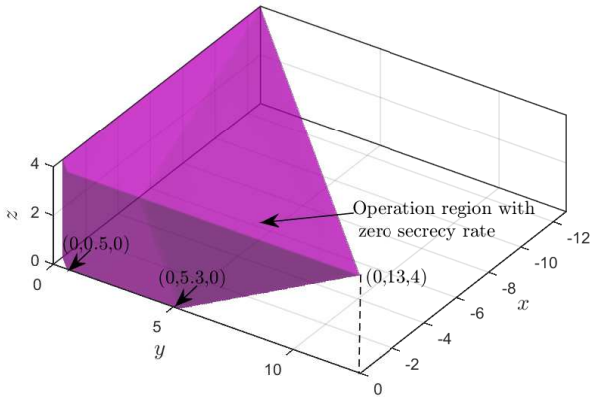}
\caption{ Operation region $\left( {x,y,z} \right)$  of noisy Eve under the channel model in equation (43) and (44) with the transform matrix ${\overline {\bf{U}}^{*}}$. Our strategy is to achieve the condition (37) and (38). }
\label{Simulation2}
\vspace{-10pt}
\end{figure}

Fig.~\ref{Simulation0} shows the average  secrecy rate  versus  the power ratio of $\alpha$ under different receivers at Eve. In this example, the number of transmit antennas satisfies  $M= 10$ and  the equivalent noise level of Bob is assumed to be equal to that of Eve, that's to say, $\sigma _{\rm{B}}^2=\sigma _{\rm{E}}^2$. We also  fix the SNR ${P \mathord{\left/
 {\vphantom {P { \sigma _{\rm{B}}^2}}} \right.
 \kern-\nulldelimiterspace} { \sigma _{\rm{B}}^2}}$ at 30 in $\rm{dB}$. As we can see, the average  secrecy rate under OGM filter increases  with $\alpha$ and achieves the maximum at $\alpha=1$ whereas  the MASR   under traditional CAC receiver is achieved  around at $\alpha=0.5$.  Basically, the MASR under  OGM receiver  is equal to that without using AN at Alice, which means that the AN scheme is completely broken down  by Eve.   This phenomenon in high SNR is in accord with the result shown in the equation (36).  The MASR loss can be as far as 8 bit/s/Hz.

 In Fig.~\ref{Simulation2}, we present the operation region $\left( {x,y,z} \right)$ which is available  for noisy Eve to cause zero secrecy rate.  Alice adopts the maximum transmission power shown in Table~\ref{Table}. For clarity, we consider one-time channel realization and  focus on  arbitrary  channel matrices. As an example, the simulated channel matrices and the generated  transform matrix are  shown in equation (43) and (44). Note that the transform matrix is always available and can be optimized based on the channel matrices and precoders at Eve. As we can see, the operation region is extensive. Especially, OGM based Eve can achieve zero secrecy rate when $z \in \left[ {0,4} \right]$ or equivalently ${d_{\rm{B}}} \in \left[ {1,{{10}^4}} \right]$, by flexibly controlling its path loss   and reducing the  noise level  so that  there exist  $y \in \left[ {0.5,5.3} \right]$ and  $x \in \left[ { - 13,0} \right]$.

\begin{figure}[!t]
\centering \includegraphics[width=1.0\linewidth]{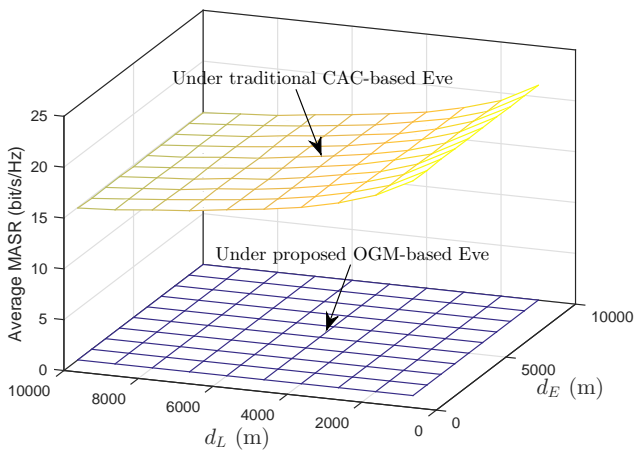}
\caption{Average MASR  versus the distance $d_{\rm B}$ and $d_{\rm E}$ in the noiseless case: $\sigma _{\rm{B}}^2=\sigma _{\rm{E}}^2 \to 0$.}
\label{Simulation1}
\vspace{-10pt}
\end{figure}

Fig.~\ref{Simulation1} demonstrates the  average MASR performance versus the distance measure $d_{\rm B}$ and $d_{\rm E}$.   Alice adopts the maximum transmission power shown in Table~\ref{Table}. The power allocation ratio in traditional AN based scheme is configured to be 0.5~\cite{Zhou} and the number of antennas is set to be $M=10$. In this case we assume that Bob has same noise level of Eve  and the noise level can approach zero. We can see that  the secrecy rate, even under  the scenario where Bob tries its best to reduce noise,  has to be  zero and becomes irrelevant to the distances when  the OGM-based Eve occurs. Therefore, the traditional expression of  secrecy rate  under AN-based scheme comes to be  imprecise and  the CAC receiver assumption for Eve actually causes a huge security for the system.
 \begin{figure}[!t]
\centering\includegraphics[width=1.0\linewidth]{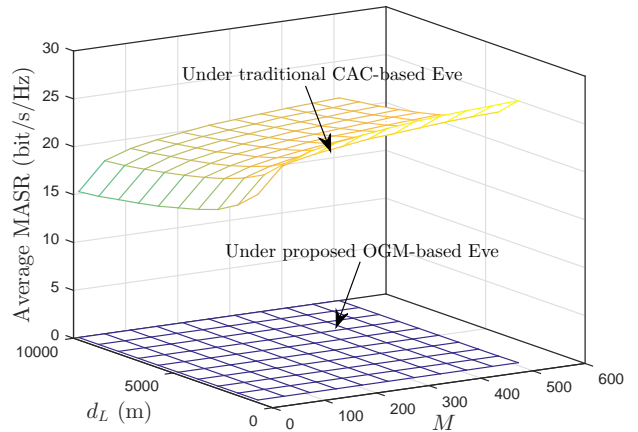}
\caption{ Average MASR  versus the distance $d_{\rm B}$ and $M$ in the noiseless case: $\sigma _{\rm{B}}^2=\sigma _{\rm{E}}^2 \to 0$.}
\label{Simulation3}
\vspace{-10pt}
\end{figure}

Fig.~\ref{Simulation3} shows the average MASR performance under various  $d_{\rm B}$ and $M$  in the noiseless Eve.  Alice adopts the maximum transmission power shown in Table~\ref{Table}. In terms of traditional AN based simulations, the location  of Eve is fixed at $d_{\rm E}=1000$ and the power ratio is fixed at $\alpha=0.55$ which is verified to be  optimal in~\cite{Zhou}. As we can see, the MASR under traditional AN based scheme is gradually enhanced with the increase  of $d_{\rm B}$ and $M$. However, the MASR  under OGM-based Eve is always equal to be zero. This result is irrelevant to the variation of $d_{\rm B}$ and the increase of $M$. This is because the high-SINR level induced by the OGM filter can suppress   the high array gains which are  brought by Alice for Bob and usually guaranteed by very large number of transmit antennas.
\section{Conclusions}
\label{Conclusions}
In this paper, we introduced a novel eavesdropping behavior which is targeted for AN based secure transmission.  We demonstrated how Eve utilized  a series of optimal filters to combat the influence of AN and proved  that the maximum achievable secrecy rate which is subjected to  this advanced eavesdropper was compelled to be zero in the case of both  noiseless  and noisy case. Simulation results  showed that  both the  distance measure, either from  Alice  to Bob or to Eve,  actually could not  eliminate or reduce this disaster impact. This terrible phenomenon also occurred  even though  the number of transmit antennas is increased significantly.
%


\end{document}